\providecommand{\longv}[1]{#1}
\providecommand{\shortv}[1]{}
\spnewtheorem{fact}[theorem]{Fact}{\bfseries}{\itshape}
\newcommand{\pd}[2]{$#1 <_{C} #2$}
\newcommand{\pid}[2]{$#1 \nless_{C} #2$}
\newcommand{\sd}[2]{$#1 <_{D} #2$}
\newcommand{\sid}[2]{$#1 \nless_{D} #2$}
\newcommand{\crule}[1]{\mathit{check}_{#1}}
\title{Graph Consistency as a Graduated Property}
\titlerunning{Graph Consistency as a Graduated Property}
\author{Jens Kosiol\inst{1}\,\textsuperscript{\faEnvelopeO}\orcidID{0000-0003-4733-2777} \and Daniel Str{\"u}ber\inst{2}\orcidID{0000-0002-5969-3521} \and Gabriele Taentzer\inst{1}\orcidID{0000-0002-3975-5238} \and Steffen Zschaler\inst{3}\orcidID{0000-0001-9062-6637}}
\authorrunning{J. Kosiol, D. Str{\"u}ber, G. Taentzer, and S. Zschaler}
\institute{%
  Philipps-Universität Marburg, Marburg, Germany\\
	\email{\{kosiolje,taentzer\}@mathematik.uni-marburg.de} \and
  Radboud University, Nijmegen, the Netherlands \email{d.strueber@cs.ru.nl} \and
  King's College London, London, UK \email{szschaler@acm.org}%
}
\begin{document}

  \maketitle

  \begin{abstract}
    Where graphs are used for modelling and specifying systems, consistency is an important concern. To be a valid model of a system, the graph structure must satisfy a number of constraints. To date, consistency has primarily been viewed as a binary property: a graph either is or is not consistent with respect to a set of graph constraints. This has enabled the definition of notions such as constraint-preserving and constraint-guaranteeing graph transformations.
    Many practical app\-li\-ca\-tions---for example model repair or evolutionary search---implicitly assume a more graduated notion of consistency, but without an explicit formalisation only limited analysis of these applications is possible.
    In this paper, we introduce an explicit notion of consistency as a graduated property, depending on the number of constraint violations in a graph. We present two new characterisations of transformations (and transformation rules) enabling reasoning about the gradual introduction of consistency: while consistency-sustaining transformations do not decrease the consistency level, consistency-improving transformations strictly reduce the number of constraint violations. We show how these new definitions refine the existing concepts of constraint-preserving and constraint-guaranteeing transformations. To support a static analysis based on our characterisations, we present criteria for deciding which form of consistency ensuring transformations is induced by the application of a transformation rule. We illustrate our contributions in the context of an example from search-based model engineering.
    \keywords{Graph Consistency \and Graph Transformation Systems \and Evolutionary Search \and Graph Repair }
  \end{abstract}

  \section{Introduction}
 
Graphs and graph transformations~\cite{EEPT06} are a good means for system modelling and specification. Graph structures naturally relate to the structures typically found in many (computer) systems and graph transformations provide intuitive tools to specify the semantics of a model or implement refinement and analysis techniques for specifications. 
  
  In all of these scenarios, it is important that the graphs used are consistent; that is, that their structures satisfy a set of constraints. Some constraints can be captured by typing graphs over so-called type graphs~\cite{EEPT06}---these allow capturing basic structural constraints such as which kinds of nodes may be connected to each other. 
  To allow the expression of further constraints, the theory of nested graph constraints has been introduced~\cite{HP09}. A graph is considered consistent if it is correctly typed and satisfies all given constraints. Note that this notion of consistency is binary: a graph either is consistent or it is not consistent. It is impossible to distinguish different degrees of consistency. 
  
  In software engineering practice, it is often necessary to live with, and manage, a degree of inconsistency~\cite{Nuseibeh+01}. This requires tools and techniques for identifying, measuring, and correcting inconsistencies. In the field of graph-based specifications, this has led to many practical applications, where a more fine-grained notion of graph consistency is implicitly applied. For example, research in model repair has aimed to automatically produce graph-transformation rules that will gradually improve the consistency of a given graph. Such a rule may not make a graph completely consistent in one transformation step, but performing a sequence of such transformations will eventually produce a consistent graph (\emph{e.g.,}~\cite{HS18,NKR17,NRA17,SH19}). In the area of search-based model engineering (\emph{e.g.,}~\cite{BZJ19,Fleck+15}),
  rules are required to be applicable to inconsistent graphs and, at least, not to produce new inconsistencies. In earlier work, we have shown how such rules can be generated at least with regard to multiplicity constraints~\cite{BZJ19}. However, in all of these works, the notion of \enquote{partial} graph consistency remains implicit. Without explicitly formalising this notion, it becomes difficult to reason about the validity of the rules generated or the correctness of the algorithm by which these rules were produced. 
  
  In this paper, we introduce a new notion of graph consistency as a graduated property. A graph can be consistent \emph{to a degree,} depending on the number of constraint violations that occur in the graph. 
  This conceptualisation allows us to introduce two new characterisations of graph transformations: a \emph{con\-sis\-ten\-cy-sus\-tai\-ning} transformation does not decrease the overall consistency level, while a \emph{con\-sis\-ten\-cy-im\-pro\-ving} transformation strictly decreases the number of violations in a graph. We lift these characterisations to the level of graph transformation rules, allowing rules to be characterised as consistency sustaining and consistency improving, respectively. We show how these definitions fit with the already established terminology of constraint-preserving and constraint-guaranteeing transformations / rules. 
  Finally, we introduce formal criteria that allow checking whether a given graph-transformation rule is consistency sustaining or consistency improving w.r.t. constraints in specific forms. 
  
  Thus, the contributions of our paper are:
  \begin{enumerate}
    \item We present the first formalisation of graph consistency as a graduated property of graphs;
    \item We present two novel characterisations of graph transformations and transformation rules with regard to this new definition of graph consistency and show how these refine the existing terminology;
    \item We present static analysis techniques for checking whether a graph-trans\-for\-ma\-tion rule is consistency sustaining or improving. 
  \end{enumerate}
  
  The remainder of this paper is structured as follows: We introduce a running example in Sect.~\ref{sec:example} before outlining some foundation terminology in Sect.~\ref{section:preliminaries}. Section~\ref{sec:sustaining-and-improving} introduces our new concepts and Sect.~\ref{sec:criteria} discusses how graph-trans\-for\-ma\-tion rules can be statically analysed for these properties. A discussion of related work in Sect.~\ref{section:related_work} concludes the paper. 
	\shortv{All proofs are provided in an extended version of this paper ~\cite{KSTZ20EV}.}
	\longv{The proofs of all results in this paper can be found in Appendix~\ref{sec:appendix}.}

  \section{Example}
\label{sec:example}

  Consider \textit{class responsibility assignment} (CRA, \cite{bowman2010solving}), a standard problem in ob\-ject-oriented software analysis.
  Given is a set of features (methods, fields) with dependencies between them.
 The goal is to create a set of classes and assign the features to classes so that a certain \textit{fitness function} is maximized.
The fitness function rewards the assignment of dependent features to the same class (cohesion), while punishing dependencies that run between classes (coupling) and solutions with too few classes.
 Solutions can be expressed as instances of the type graph shown in the left of Fig.~\ref{fig:mutationrules}.
 For realistic problem instances, an exhaustive enumeration of all  solutions to find the optimal one is not feasible.
	
  Recently, a number of works have addressed the CRA problem via a combination of graph transformation and meta-heuristic search techniques, specifically evolutionary algorithms \cite{fleck2016class,struber2017generating,BZJ19}.
  An evolutionary algorithm uses genetic operators such as cross-over and mutation to find optimal solution candidates in an efficient way.
  In this paper, we focus on mutation operators, which have been specified using graph transformation rules in these works.

  \begin{figure}[tb]
    \centering
      \includegraphics[width=0.15\textwidth,valign=t,draft=false]{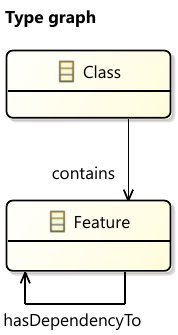}
      \includegraphics[width=0.83\textwidth,valign=t,draft=false]{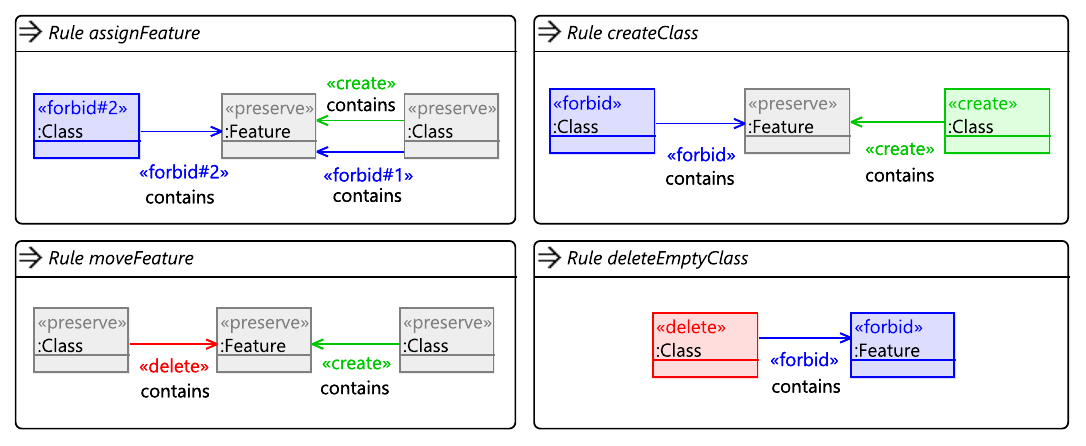}
    \caption{Type graph and four mutation rules for the CRA problem.}
    \label{fig:mutationrules}
  \end{figure}

  Figure~\ref{fig:mutationrules} depicts four mutation rules for the CRA problem, taken from the available MDEOptimiser solution \cite{BZS18}.
	The rules are specified as graph transformation rules \cite{EEPT06} in the Henshin notation~\cite{ABJKT10,struber2017henshin}:
  Rule elements are tagged as \textit{delete}, \textit{create}, \textit{preserve} or \textit{forbid},  which denotes them as being included in the LHS, the RHS, in both rule sides, or a NAC. 
  Rule \textit{assignFeature} assigns a randomly selected as-yet-unassigned feature to a class.
  Rule \textit{createClass} creates a class and assigns an as-yet-unassigned feature to it.
  Rule \textit{moveFeature} moves a feature between two classes.
  Rule \textit{deleteEmptyClass} deletes a class to which no feature is assigned.

  Solutions in an optimization problem such as the given one usually need to be consistent with regard to the constraints given by the problem domain.
  We consider three constraints for the CRA case:
  \begin{enumerate}[label=($c_\arabic*$), itemsep=0pt]
  \item Every feature is contained in at most one class.
  \item Every class contains at least one feature.
  \item If a feature $F_1$ has a dependency to another feature $F_2$, \\ and $F_2$ is contained in a different class than $F_1$, \\ then $F_1$ must have a dependency to a feature $F_3$ in the same class.
  \end{enumerate}

  Constraints $c_1$ and $c_2$ come from Fleck et al.'s formulation of the CRA problem \cite{fleck2016class}.
  Constraint $c_3$ can be considered a \textit{helper constraint} (compare \textit{helper objectives }\cite{jensen2004helper}) that aims to enhance the efficiency of the search by formulating a constraint with a positive impact to the fitness function:
  Assigning dependent features to the same class is likely to improve coherence.

  Given an arbitrary solution model (valid or invalid), mutations may introduce new violations.
	For example, applying \textit{moveFeature} can leave behind an empty class, thus violating $c_2$.
  While constraint violations can potentially be removed using  repair techniques \cite{NRA17,HS18,SH19}, these can be computationally expensive and may involve strategies that lead to certain regions of the search space being preferred, threatening the efficiency of the search.
  Instead, it would be desirable to design mutation operators that impact consistency in a positive or at least neutral way.
  Each application of a mutation rule should contribute to some particular violations being removed, or at least ensure that the degree of consistency does not decrease.
  Currently, there exists no formal framework for identifying such rules.
  The established notions of constraint-preserving and constraint-guaranteeing rules \cite{HP09} assume an already-valid model or a transformation that removes all violations at once; both are infeasible in our scenario.

  \section{Preliminaries}
\label{section:preliminaries}
  
  Our new contributions are based on typed graph transformation systems following the double-pushout approach~\cite{EEPT06}. 
  We implicitly assume that all graphs, also the ones occurring in rules and constraints, are typed over a common type graph $\mathit{TG}$; that is, there is a class $\mathit{Graph}_{TG}$ of graphs typed over $\mathit{TG}$. 
  A \emph{nested graph constraint}~\cite{HP09} is a tree of injective graph morphisms. 
    
    \begin{define}[caption={(Nested) graph conditions and constraints}]
      Given a graph $P$, a \emph{(nested) graph condition} over $P$ is defined recursively as follows:
      \texttt{true} is a graph condition over $P$ and if $a: P \hookrightarrow C$ is an injective morphism and $d$ is a graph condition over $C$, $\exists \, (a: P \hookrightarrow C, d)$ is a graph condition over $P$ again. 
      If $d_1$ and $d_2$ are graph conditions over $P$, $\neg d_1$ and $d_1 \wedge d_2$ are graph conditions over $P$.
      A \emph{(nested) graph constraint} is a condition over the empty graph $\emptyset$. 
      
      A condition or constraint is called \emph{linear} if the symbol $\wedge$ does not occur, i.e., if it is a (possibly empty) chain of morphisms.
      The \emph{nesting level} $\mathit{nl}$ of a condition $c$ is recursively defined by setting $\mathit{nl}(\texttt{true}) \coloneqq 0$, $\mathit{nl}(\exists \, (a: P \hookrightarrow C, d))  \coloneqq \mathit{nl}(d) + 1$, $\mathit{nl}(\neg d) \coloneqq \mathit{nl}(d)$, and $\mathit{nl}(d_1\wedge d_2) \coloneqq \max (\mathit{nl}(d_1), \mathit{nl}(d_2))$. 
      Given a graph condition $c$ over $P$, an injective morphism $p: P \hookrightarrow G$ \emph{satisfies} $c$, written $p \models c$, if the following applies:
      Every morphism satisfies \texttt{true}. 
      The morphism $p$ satisfies a condition of the form $c = \exists \, (a: P \hookrightarrow C, d)$ if there exists an injective morphism $q: C \hookrightarrow G$ such that $p = q \circ a$ and $q$ satisfies $d$. 
      For Boolean operators, satisfaction is defined as usual. 
      A graph $G$ satisfies a graph constraint $c$, denoted as $G \models c$, if the empty morphism to $G$ does so. 
      A graph constraint $c_1$ \emph{implies} a graph constraint $c_2$, denoted as $c_1 \Rightarrow c_2$, if $G \models c_1 \Rightarrow G \models c_2$ for all graphs $G$. 
      The constraints are \emph{equivalent}, denoted as $c_1 \equiv c_2$, if $c_1 \Rightarrow c_2$ and $c_2 \Rightarrow c_1$.
    \end{define}
   
    In the notation of graph constraints, we drop the domains of the involved morphisms and occurrences of \texttt{true} whenever they can unambiguously be inferred. 
    For example, we write $\exists (C,\neg \exists C^\prime)$ instead of $\exists (\emptyset \hookrightarrow C, \neg \exists (a: C \hookrightarrow C^\prime, \texttt{true}))$. 
    Moreover, we introduce $\forall (C, d)$ as an abbreviation for the graph constraint $\neg \exists (C, \neg d)$. 
    Further sentential connectives like $\vee$ or $\Rightarrow$ can be introduced as abbreviations as usual (which is irrelevant for linear constraints).

    We define a normal form for graph conditions that requires that the occurring quantifiers alternate. 
    For every linear condition there is an equivalent condition in this normal form~\cite[Fact~2]{SH19}.
    
    \begin{define}[caption={Alternating quantifier normal form (ANF)}]
      A linear condition $c$ with $\mathit{nl}(c) \geq 1$ is in \emph{alternating quantifier normal form} (ANF) when the occurring quantifiers alternate, i.e., if $c$ is of the form $Q(a_1,\Bar{Q}(a_2, Q(a_3, \dots)\dots)$ with $Q \in \{\exists,\forall\}$ and $\Bar{\exists} = \forall, \Bar{\forall}= \exists$, none of the occurring morphisms $a_i$ is an isomorphism, and the only negation, if any, occurs at the innermost nesting level (i.e., the constraint is allowed to end with \texttt{false}). If a constraint in ANF starts with $\exists$, it is called \emph{existential}, otherwise it is called \emph{universal}.
    \end{define}
    
    \begin{lemma}[Non-equivalence of constraints in ANF]\label{lem:non-equivalence-anf}
      Let $c_1 = \exists(C_1,d_1)$ and $c_2 = \forall(C_2,d_2)$ be constraints in ANF. 
      Then $c_1 \not\equiv c_2$.
    \end{lemma}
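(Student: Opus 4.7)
The plan is to exhibit a single graph that distinguishes $c_1$ from $c_2$, namely the empty graph $\emptyset$. The key enabling observation comes directly from the ANF definition: as graph constraints, both $c_1$ and $c_2$ are conditions over $\emptyset$, so their outermost morphisms are of the form $a_i: \emptyset \hookrightarrow C_i$, and the ANF stipulation that none of the occurring morphisms is an isomorphism forces $C_1 \neq \emptyset$ and $C_2 \neq \emptyset$.

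First I would show that $\emptyset \not\models c_1$. Unfolding the definition of satisfaction, the empty morphism $\emptyset \hookrightarrow \emptyset$ satisfies $\exists(a_1: \emptyset \hookrightarrow C_1, d_1)$ only if there exists an injective morphism $q: C_1 \hookrightarrow \emptyset$. Since $C_1 \neq \emptyset$, no such $q$ exists, so $c_1$ fails on $\emptyset$ regardless of the inner condition $d_1$.

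Next I would show that $\emptyset \models c_2$. Using the abbreviation $\forall(C_2, d_2) \equiv \neg \exists(C_2, \neg d_2)$, the empty graph satisfies $c_2$ iff there is no injective morphism $q: C_2 \hookrightarrow \emptyset$ for which $q \models \neg d_2$. As $C_2 \neq \emptyset$, there is no injective morphism $C_2 \hookrightarrow \emptyset$ at all, so the existential inside the negation is vacuously false and $c_2$ holds on $\emptyset$, again independently of $d_2$. Combining the two observations, $\emptyset$ separates $c_1$ and $c_2$, hence $c_1 \not\equiv c_2$.

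There is no real obstacle here; the only thing to be careful about is the bookkeeping between the two equivalent forms of $\forall$ and to observe that the non-isomorphism clause in the ANF definition is exactly what rules out the degenerate case $C_i = \emptyset$ in which the argument would collapse. The inner conditions $d_1$ and $d_2$ play no role whatsoever in the separation, which is why a single, maximally simple graph suffices as a witness.
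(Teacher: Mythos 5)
Your proof is correct and takes exactly the same route as the paper: both use the empty graph as the separating witness, observing that the ANF non-isomorphism clause forces $C_1 \neq \emptyset \neq C_2$, so that $\emptyset \not\models c_1$ while $\emptyset \models c_2$ vacuously. Your write-up merely spells out the unfolding of the satisfaction definition in more detail than the paper's two-line argument.
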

    
We have $c_1 \not\equiv c_2$ since $\emptyset \models c_2$ but $\emptyset \not\models c_1$.
    Lemma~\ref{lem:non-equivalence-anf} implies that the first quantifier occurring in the ANF of a constraint separates linear constraints into two disjoint classes. 
    This ensures that our definitions in Section~\ref{sec:sustaining-and-improving} are meaningful. 
    
     \medskip
    \emph{Graph transformation} is the rule-based modification of graphs. The following definition recalls graph transformation as a double-pushout.
    
    \begin{define}[caption={Rule and transformation}]
    A {\em plain rule} $r$ is defined by $p = (L  \hookleftarrow K\hookrightarrow R)$ with $L, K,$ and $R$ being graphs connected by two graph inclusions. An application condition $ac$ for $p$ is a condition over $L$. A {\em rule} $r = (p,ac)$ consists of a plain rule $p$ and an application condition $ac$ over $L$.
    
	A {\em transformation (step)} $G \Rightarrow_{r,m} H$ which applies rule $r$ to a graph $G$ consists of two pushouts as depicted in Fig.~\ref{fig:rule-application}. 
	Rule $r$ is {\em applicable} at the injective morphism $m: L \rightarrow G$ called {\em match} if $m \models ac$ and there exists a graph $D$ such that the left square is a pushout.
	Morphism $n$ is called \emph{co-match}.
	Morphisms $g$ and $h$ are called {\em transformation morphisms}. 
    The {\em track morphism}~\cite{Plump05} of a transformation step $G \Rightarrow_{r,m} H$ is the partial morphism $tr: G \dashrightarrow H$ defined by $tr(x) = h(g^{-1}(x))$ for $x \in g(D)$ and undefined otherwise.
    \end{define}
    
    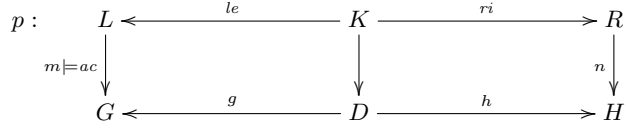
\begin{figure}[tbp]
        \begin{center}
        $$
          \xymatrix@C=16pt@R=8pt{
            p: & L \ar@{<-}[rrrr]^{\mathit{le}}\ar@{->}[dd]_{m \models \mathit{ac}} & & & & K \ar@{->}[rrrr]^{\mathit{ri}}\ar@{->}[dd] & & & & R\ar@{->}[dd]_{n} \\
               &                                      & & & &                                  & & & & \\
               & G \ar@{<-}[rrrr]^{g}                 & & & & D \ar@{->}[rrrr]^{h}            & & & & H \\
          }
        $$
        \end{center}
         \vspace{-\baselineskip}
        \caption{Rule application}
        \label{fig:rule-application}
				\vspace{-.5\baselineskip}
    \end{figure}
    
    Obviously, transformations interact with the validity of graph constraints. 
    Two well-studied notions are constraint-guaranteeing and -preserving transformations~\cite{HP09}. 
    \begin{define}[caption={$c$-guaranteeing and -preserving transformation}]
      Given a constraint $c$, a transformation $G \Rightarrow_{r,m} H$ 
      is \emph{$c$-guaranteeing} if $H\models c$.  
      Such a transformation is \emph{$c$-preserving} if $G \models c \Rightarrow H \models c$. 
      A rule $r$ is \emph{$c$-guaranteeing} (\emph{$c$-preserving}) if every transformation via $r$ is.
    \end{define}
    
    As we will present criteria for consistency sustainment and improvement based on conflicts and dependencies of rules, we recall these notions here as well. 
    Intuitively, a transformation step \emph{causes a conflict} on another one if it hinders this second one. 
    A transformation step is \emph{dependent} on another one if it is first enabled by that. 
    
    \begin{define}[caption={Conflict}]
	Let a pair of transformations $(t_1, t_2): (G \Rightarrow_{m_1,r_1} H_1, G \Rightarrow_{m_2,r_2} H_2)$ applying rules $r_i = (L_i \hookleftarrow K_i \hookrightarrow R_i, ac_i),\ i = 1,2$ be given such that $t_i$ yields transformation morphisms $G \stackrel{g_i}{\leftarrow} D_i \stackrel{h_i}{\rightarrow} H_i$. 
	\emph{Transformation pair}  $(t_1,t_2)$ is \emph{conflicting} (or \emph{$t_1$ causes a conflict on $t_2$}) if there does not exist a morphism $x: L_2 \rightarrow D_1$ such that $g_1 \circ x = m_2$ and $h_1 \circ x \models \mathit{ac}_2$. 
	\emph{Rule pair} $(r_1,r_2)$ is \emph{conflicting} if there exists a conflicting transformation pair $(G \Rightarrow_{m_1,r_1} H_1, G \Rightarrow_{m_2,r_2} H_2)$. 
	If $(r_1,r_2)$ and $(r_2,r_1)$ are both not conflicting, rule pair $(r_1,r_2)$ is called {\em parallel independent}.
 \end{define}
 
 \begin{define}[caption={Dependency}]
	Let a sequence $t_1; t_2: G \Rightarrow_{m_1,r_1} H_1 \Rightarrow_{m_2,r_2} X$ of transformations applying rules $r_i =  (L_i \hookleftarrow K_i \hookrightarrow R_i, ac_i),\ i= 1,2$ be given such that $t_1$ yields transformation morphisms $G \stackrel{g_1}{\leftarrow} D_1 \stackrel{h_1}{\rightarrow} H_1$. 
	Transformation  $t_2$ is \emph{dependent on} $t_1$ if there does not exist a morphism $x: L_2 \rightarrow D_1$ such that $h_1 \circ x = m_2$ and $g_1 \circ x \models ac_2$. 
	Rule $r_2$ is {\em dependent on} rule $r_1$ if there exists a transformation sequence $t_1;t_2: G \Rightarrow_{m_1,r_1} H_1 \Rightarrow_{m_2,r_2} X$ such that $t_2$ is dependent on $t_1$.
	If $r_1$ is not dependent on $r_2$ and $r_2$ is not dependent on $r_1$, rule pair $(r_1,r_2)$ is called {\em sequentially independent}. 
	
	A \emph{weak critical sequence} is a sequence $t_1; t_2: G \Rightarrow_{m_1,r_1} H_1 \Rightarrow_{m_2,r_2} X$ of transformations such that $t_2$ depends on $t_1$, $n_1$ and $m_2$ are jointly surjective (where $n_1$ is the co-match of $t_1$), and $m_i$ is \emph{not} required to satisfy $\mathit{ac}_i$ ($i=1,2$).
 \end{define}
  As rule $r_2$ in a rule pair $(r_1,r_2)$ will always be plain in this paper, a transformation step can cause a conflict on another one if and only if it deletes an element that the second transformation step matches.
 Similarly, a transformation step can depend on another one if and only if the first step creates an element that the second matches or deletes an edge that is adjacent to a node the second one deletes.

  \section{Consistency-sustaining and consistency-im\-pro\-ving rules and transformations}
\label{sec:sustaining-and-improving}

  In this section, we introduce our key new concepts. We do so in three stages, first introducing foundational definitions for partial consistency, followed by a generic definition of consistency sustainment and improvement. Finally, we give stronger definitions for which we will be able to provide a static analysis in Sect.~\ref{sec:criteria}.
    
  \subsection{Partial consistency}

  To support the discussion and analysis of rules and transformations that improve graph consistency, but do not produce a fully consistent graph in one step, we introduce the notion of \emph{partial consistency.} 
   We base this notion on relating the number of \emph{constraint violations} to the total number of \emph{relevant occurrences} of a constraint. 
   For the satisfaction of an existential constraint, a single valid occurrence is enough. 
   In contrast, universal constraints require the satisfaction of some sub-constraint for every occurrence. 
   Hence, the resulting notion is binary in the existential case, but graduated in the universal one.
   
   \medskip
   
   \noindent\fbox{%
    \parbox{.96\linewidth}{%
        In the remainder of this paper, a \emph{constraint} is always a linear constraint in ANF having a nesting level $\geq 1$.\footnotemark\ Moreover, all graphs are finite.
        }%
    }
    \footnotetext{Requiring nesting level $\geq 1$ is no real restriction as constraints with nesting level $0$ are Boolean combinations of \texttt{true} which means they are equivalent to \texttt{true} or \texttt{false}, anyhow. In contrast, restricting to linear constraints actually excludes some interesting cases. We believe that the extension of our definitions and results to also include the non-linear case will be doable. Restricting to the linear case first, however, makes the statements much more accessible and succinct.} 
   
   \begin{define}[caption={Occurrences and violations}]
     Let $c = Q(\emptyset \to C,d)$ with $Q \in \{\exists, \forall\}$ be a constraint. 
     An \emph{occurrence} of $c$ in a graph $G$ is an injective morphism $p: C \hookrightarrow G$, and $\mathit{occ}(G,c)$ denotes the \emph{number of such occurrences}.
     
     If $c$ is universal, its number of \emph{relevant  occurrences} in a graph $G$, denoted as $\mathit{ro}(G,c)$, is defined as $\mathit{ro}(G,c) \coloneqq \mathit{occ}(G,c)$ and its \emph{number of constraint violations}, denoted as $\mathit{ncv}(G,c)$, is the number of occurrences $p$ for which $p \not\models d$.
      
     If $c$ is existential, $\mathit{ro}(G,c) \coloneqq 1$ and $\mathit{ncv}(G,c) \coloneqq 0$ if there exists an occurrence $p: C \hookrightarrow G$ such that $p \models d$ but $\mathit{ncv}(G,c) \coloneqq 1$ otherwise. 
   \end{define}
   
   \begin{define}[caption={Partial consistency}]
     Given a graph $G$ and a constraint $c$, $G$ is \emph{consistent} w.r.t. $c$ if $G \models c$.
     The \emph{consistency index of $G$ w.r.t. $c$} is defined as
     \begin{equation*}
       \mathit{ci}(G,c) \coloneqq 1 - \frac{\mathit{ncv}(G,c)}{\mathit{ro}(G,c)}
     \end{equation*}
     where we set $\frac{0}{0} \coloneqq 0$.  
     We say that $G$ is \emph{partially consistent} w.r.t. $c$ if $\mathit{ci}(G,c) > 0$. 
   \end{define}
   
   The next proposition makes precise that the
   consistency index runs between $0$ and $1$ and indicates the degree of consistency a graph $G$ has w.r.t. a constraint~$c$. 
   
	\begin{fact}[Consistency index]\label{fact:consistency-index}
    Given a graph $G$ and a constraint $c$, then $0 \leq ci(G,c) \leq 1$ and $G \models c$ if and only if $ci(G,c) = 1$. 
    Consistency implies partial consistency. 
    Moreover, $ci(G,c) \in \{0,1\}$ for an existential constraint. 
	\end{fact}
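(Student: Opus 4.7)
The plan is to proceed by a case analysis on whether the linear ANF constraint $c$ is existential or universal, and verify each of the four assertions ($0 \leq \mathit{ci} \leq 1$, the equivalence $G \models c \Leftrightarrow \mathit{ci}(G,c) = 1$, consistency implies partial consistency, and the $\{0,1\}$-valuedness in the existential case) by unfolding the definitions of $\mathit{occ}$, $\mathit{ro}$ and $\mathit{ncv}$.

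For an existential constraint $c = \exists(\emptyset \hookrightarrow C, d)$, the bounds and the dichotomy $\mathit{ci}(G,c) \in \{0,1\}$ are immediate: $\mathit{ro}(G,c) = 1$ and $\mathit{ncv}(G,c) \in \{0,1\}$ by definition, whence $\mathit{ci}(G,c) = 1 - \mathit{ncv}(G,c) \in \{0,1\}$. For the equivalence, note that $\mathit{ncv}(G,c) = 0$ holds precisely when there exists an injective $p: C \hookrightarrow G$ with $p \models d$, which is exactly the semantic condition $G \models c$; thus $\mathit{ci}(G,c) = 1$ iff $G \models c$.

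For a universal constraint $c = \forall(\emptyset \hookrightarrow C, d)$, violations form a sub-collection of occurrences by definition, so $0 \leq \mathit{ncv}(G,c) \leq \mathit{ro}(G,c)$. When $\mathit{ro}(G,c) > 0$ this gives $0 \leq \mathit{ncv}/\mathit{ro} \leq 1$, hence $0 \leq \mathit{ci}(G,c) \leq 1$. When $\mathit{ro}(G,c) = 0$, necessarily $\mathit{ncv}(G,c) = 0$ as well and the convention $\tfrac{0}{0} \coloneqq 0$ yields $\mathit{ci}(G,c) = 1$; simultaneously $G \models c$ holds vacuously, so the equivalence is preserved in this boundary case. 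In general, $G \models c$ means every occurrence of $C$ in $G$ satisfies $d$, i.e.\ $\mathit{ncv}(G,c) = 0$, giving $\mathit{ci}(G,c) = 1$; conversely $\mathit{ci}(G,c) = 1$ forces $\mathit{ncv}(G,c) = 0$ (otherwise the fraction would be strictly positive for $\mathit{ro} > 0$), so again $G \models c$. Finally, consistency implies partial consistency trivially: $G \models c$ yields $\mathit{ci}(G,c) = 1 > 0$.

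The only mild obstacle is the book-keeping for the degenerate case $\mathit{ro}(G,c) = 0$ in the universal setting, where the convention $\tfrac{0}{0} \coloneqq 0$ must be used to reconcile the formula for $\mathit{ci}$ with the vacuous satisfaction of $c$; once that case is separated out, the remainder is routine unfolding of definitions.
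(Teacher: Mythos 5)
Your proof is correct and follows essentially the same route as the paper's: unfolding the definitions of $\mathit{occ}$, $\mathit{ro}$ and $\mathit{ncv}$, using $\mathit{ncv}(G,c) \leq \mathit{ro}(G,c)$ for the bounds and the chain $\mathit{ci}(G,c)=1 \Leftrightarrow \mathit{ncv}(G,c)=0 \Leftrightarrow G \models c$ for the equivalence. Your explicit treatment of the degenerate case $\mathit{ro}(G,c)=0$ via the $\tfrac{0}{0}\coloneqq 0$ convention is a point the paper's terser argument leaves implicit, and it is handled correctly.
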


			\begin{figure}[t]
			\centering
				\includegraphics[width=1.00\textwidth,draft=false]{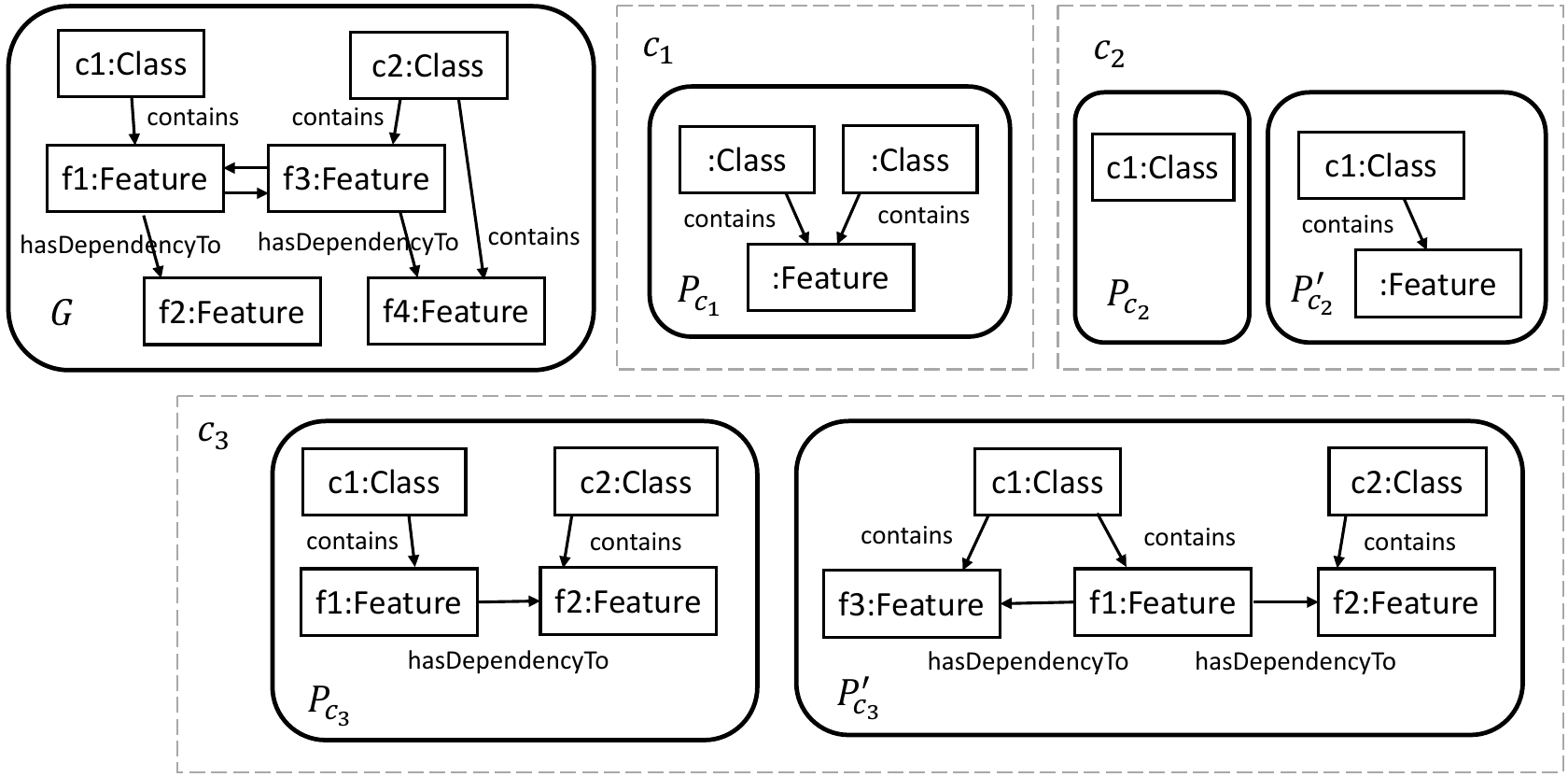}
			\caption{Example constraints and graph.}
			\label{fig:example-graphbased}
		\end{figure}

	\begin{example}
Based on Fig.~\ref{fig:example-graphbased}, we can express the three informal constraints from Section~\ref{sec:example} as nested graph constraints.
Constraint $c_1$ can be expressed as $\neg \exists P_{c_1}$, constraint $c_2$ becomes $\forall (P_{c_2}, \exists P_{c_2}^\prime)$,  
and constraint $c_3$ becomes $\forall (P_{c_3}, \exists P_{c_3}^\prime)$.
Graph $G$ (in the left top corner of Fig.~\ref{fig:example-graphbased}) satisfies $c_1$ and $c_2$. 
It does not satisfy $c_3$, since we  cannot find an occurrence of $P_{c_3}^\prime$ for the occurrence of  $P_{c_3}$ in $G$ where \textit{f1} and \textit{f2} are mapped to \textit{f1} and \textit{f3}, respectively. 
	  Graph $G$ in Fig.~\ref{fig:example-graphbased} has the consistency index 0.5 with regard to~$c_3$, since one violation exists, and two non-violating occurrences are required. 
  \end{example}

  \subsection{Consistency sustainment and improvement}
  
  In the remainder of this section, our goal is to introduce the notions of \emph{con\-sis\-ten\-cy-sus\-tai\-ning} and \emph{consistency-improving rule applications} which refine the established notions of preserving and guaranteeing applications~\cite{HP09}. 
  
  \begin{define}[caption={Consistency sustainment and improvement}]\label{def:non_direct_sustainment_and_improvement}
    Given a graph constraint $c$ and a rule $r$, a transformation $t: G \Rightarrow_{r,m} H$ is \emph{consistency sustaining} w.r.t. $c$ if $\mathit{ci}(G,c) \leq \mathit{ci}(H,c)$. 
    It is \emph{consistency improving} if it is consistency sustaining, $\mathit{ncv}(G,c) > 0$, and $\mathit{ncv}(G,c) > \mathit{ncv}(H,c)$. 
    
    The rule $r$ is \emph{consistency sustaining} if all of its applications are. 
    It is \emph{consistency improving} if all of its applications are consistency sustaining and there exists a graph $G \in \mathit{Graph}_{TG}$ with $\mathit{ncv}(G,c) > 0$ and a consistency-improving transformation $G \Rightarrow_{r,m} H$. 
    A consistency improving rule is \emph{strongly consistency improving} if all of its applications to graphs $G$ with $\mathit{ncv}(G,c) > 0$ are consistency-improving transformations. 
  \end{define}
  In the above definition, we use the number of constraint violations (and not the consistency index) to define improvement to avoid an undesirable side-effect: 
  Defining improvement via a growing consistency index would lead to consistency-improving transformations (w.r.t. a universal constraint) which do not repair existing violations but only create new valid occurrences of the constraint. 
  Hence, there would exist infinitely long transformation sequences where every step increases the consistency index but validity is never restored. 
  Consistency-im\-prov\-ing transformations, and therefore \emph{strongly} consistency improving rules, require that the number of constraint violations strictly decreases in each step. Therefore, using only such transformations and rules, we cannot construct infinite transformation sequences.
  
  Any consistency-improving rule can be turned into a strongly consistency-improving rule if suitable pre-conditions can be added that restrict the applicability of the rule only to those cases where it can actually improve a constraint violation. This links the two forms of consistency-improving rules to their practical applications: in model repair~\cite{NKR17,SH19} we want to use rules that will only make a change to a graph when there is a violation to be repaired---strongly consistency-improving rules. However, in evolutionary search~\cite{BZJ19}, we want to allow rules to be able to make changes even when there is no need for repair, but to fix violations when they occur; consistency-improving rules are well-suited here as they can be applied even when no constraint violations need fixing.

  \subsection{Direct consistency sustainment and improvement}

  While the above definitions are easy to state and understand, it turns out that they are inherently difficult to investigate. 
  Comparing numbers of (relevant) occurrences and violations allows for very disparate behavior of consistency-sustaining (-improving) transformations: 
  For example, a transformation is allowed to destroy as many valid occurrences as it repairs violations and is still considered to be consistency sustaining w.r.t. a universal constraint.
  
  Next, we introduce further qualified notions of consistency sustainment and improvement. 
  The idea behind this refinement is to \emph{retain} the validity of occurrences of a universal constraint: valid occurrences that are preserved by a transformation are to remain valid. 
	In this way, sustainment and improvement become more \emph{direct} as it is no longer possible to compensate for introduced violations by introducing additional valid occurrences. 
  The notions of (direct) sustainment and improvement are related to one another and also to the already known ones that preserve and guarantee constraints. 
  In Sect.~\ref{sec:criteria} we will show how these stricter definitions allow for static analysis techniques to identify consistency-sustaining and -improving rules.  
  
  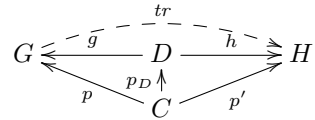
\begin{wrapfigure}{r}{0.34\textwidth} 
        \centering
				\vspace{-20pt}
          \xymatrix@C=16pt@R=8pt{
                G \ar@{<-}[rr]^{g}\ar@{-->}@/^1pc/[rrrr]^{\mathit{tr}}                  &  & D \ar@{->}[rr]^{h}             &  &  H \\
                                                     &  &  C \ar@{->}[llu]^{p}\ar@{->}[rru]_{p^{\prime}}\ar@{->}[u]^{p_D}  & &\\
          }
					\vspace{-5pt}
        \caption{Rule application with morphisms from a graph $C$, occurring in some constraint}
        \label{fig:interaction-rule-application-occurrence}
    \end{wrapfigure}

		The following definitions assume a transformation step to be given and relate occurrences of constraints in its start and result graph as depicted in Fig.~\ref{fig:interaction-rule-application-occurrence}. 
  The existence of a morphism $p_D$ such that the left triangle commutes (and $p^\prime$ might be defined as $h \circ p_D$) is equivalent to the tracking morphism $\mathit{tr}: G \dashrightarrow H$ being a total morphism when restricted to $p(C)$ which is equivalent to the transformation not destroying the occurrence $p$. 
    
    \begin{define}[caption={Direct consistency sustainment}]
      Given a graph constraint $c$, a transformation $t: G \Rightarrow_{m,r} H$ via rule $r$ at match $m$ with trace $tr$ (Fig.~\ref{fig:interaction-rule-application-occurrence}) is \emph{directly consistency sustaining} w.r.t. $c$ if either $c$ is existential and the transformation is $c$-preserving or $c = \forall (C,d)$ is universal and
      \begin{align*}
        & \forall p: C \hookrightarrow G \big( (p \models d \wedge \mathit{tr} \circ p \text{ is total}) \Rightarrow \mathit{tr} \circ p \models d\big) \wedge \\
        & \forall p^\prime: C \hookrightarrow H \big(\neg \exists p: C \hookrightarrow G \left( p^\prime = \mathit{tr} \circ p \right) \Rightarrow p^\prime \models d\big) \enspace .
      \end{align*}
       
     A {\em rule} $r$ is \emph{directly consistency sustaining} w.r.t. $c$ if all its applications are. 
    \end{define}
		
    \begin{wraptable}{r}{0.5\linewidth}
      \vspace{-10pt}
      \centering
      \caption{Properties of example rules.}
      \label{table:exampleProperties}	
      \resizebox{0.5\textwidth}{!}{%
        \setlength{\tabcolsep}{0.2em}
        \begin{tabular}{r|ccc|ccc}
          \toprule
                            & \multicolumn{3}{c}{\textbf{Consistency}} & \multicolumn{3}{|c}{\textbf{Consistency}} \\ 
                            & \multicolumn{3}{c}{\textbf{sustaining }} & \multicolumn{3}{|c}{\textbf{improving}} \\ 
          \textbf{Rule}                  &  \ \textbf{$c_1$} \ &  \ \textbf{$c_2$}   & \textbf{$c_3$} \  & \ \textbf{$c_1$} \  & \ \textbf{$c_2$} \  & \textbf{$c_3$} \  \\
          \hline
          \textit{assignFeature}     & +    & +    & - & -    & +    & - \\
          \textit{createClass}      & +    & +    & - & -    & -    & - \\
          \textit{moveFeature}       & (+)    & -    & - & -    & -    & - \\ 
          \textit{deleteEmptyClass}  & +    & +    & + & -    & +* & - \\
          \bottomrule
								\multicolumn{7}{l}{ \hspace{30pt}  \textbf{Legend}:    + denotes \textit{directly}, (+) denotes} \\
		\multicolumn{7}{l}{  \hspace{70pt} \textit{non-directly}, * denotes \textit{strongly} }\\

        \end{tabular}
      }
      \vspace{-20pt}
    \end{wraptable}    
    The first requirement in the definition checks that constraints that were already valid in $G$ are still valid in $H$, unless their occurrence has been removed; that is, the transformation must not make existing valid occurrences invalid. Note, however, that we do not require that the constraint be satisfied by the same extension, just that there is still a way to satisfy the constraint at that occurrence. The second requirement in the definition checks that every ``new'' occurrence of the constraint in $H$ satisfies the constraint; that is, the transformation must not introduce fresh violations. 

  The following theorem relates the new notions of (direct) consistency sustainment to preservation and guarantee of constraints.
  
  \begin{theorem}[Sustainment relations]\label{thm:relations-sustainment}
    Given a graph constraint $c$, every $c$-gua\-ran\-tee\-ing transformation is directly consistency-sustaining 
		and every con\-sis\-ten\-cy-sustaining transformation is $c$-preserving. 
    The analogous implications hold on the rule level: 
    
		\vspace{-\baselineskip}
    {
    \small
    $$
      \xymatrix@C=16pt@R=8pt{
            \mbox{constraint-preserving rule} \ar@{<=}[rrrr]^{\mbox{\cite{HP09}}} \ar@{<=}[dd]^{\mathit{Thm.~\ref{thm:relations-sustainment}}} & & & & \mbox{constraint-guaranteeing rule} \ar@{=>}[dd]^{\mathit{Thm.~\ref{thm:relations-sustainment}}}\\
            & & & & \\
            \mbox{consistency-sustaining rule} & & & & \mbox{directly consistency-sustaining rule}
            }
    $$
    }
  \end{theorem}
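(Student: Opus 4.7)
The plan is to prove the three new implications of Theorem~\ref{thm:relations-sustainment} at the transformation level, then lift to the rule level by universal quantification over all applications. Throughout I split on whether $c$ is existential or universal, a dichotomy justified by Lemma~\ref{lem:non-equivalence-anf} together with the standing assumption that constraints are linear and in ANF of nesting level $\geq 1$.

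For \emph{$c$-guaranteeing $\Rightarrow$ directly consistency-sustaining}, assume the transformation $t: G \Rightarrow H$ satisfies $H \models c$. In the existential case, DCS asks for $c$-preservation, which is trivial because the conclusion $H \models c$ is given unconditionally. In the universal case $c = \forall(C,d)$, the assumption $H \models c$ means $q \models d$ for every injective $q: C \hookrightarrow H$; both DCS clauses are instances of this (take $q = tr \circ p$ for the first and $q = p'$ for the second).

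For \emph{consistency-sustaining $\Rightarrow$ $c$-preserving}, assume $\mathit{ci}(G,c) \leq \mathit{ci}(H,c)$ and $G \models c$. By Fact~\ref{fact:consistency-index}, $\mathit{ci}(G,c) = 1$; hence $\mathit{ci}(H,c) = 1$ (using $\mathit{ci} \leq 1$), and the same fact gives $H \models c$.

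The main effort is the implication \emph{directly consistency-sustaining $\Rightarrow$ consistency-sustaining}, and this is where I expect the principal difficulty. The existential case is straightforward: DCS coincides with $c$-preservation, and since $\mathit{ci}$ is two-valued for existential constraints, preservation immediately gives $\mathit{ci}(G,c) \leq \mathit{ci}(H,c)$. For universal $c = \forall(C,d)$, I would partition the occurrences of $C$ in $G$ into those that are \emph{preserved} (injective $p$ with $tr \circ p$ total) and those that are \emph{destroyed}, and dually partition occurrences in $H$ into images of preserved ones and \emph{fresh} ones; that $tr$ induces a bijection between preserved occurrences in $G$ and their images in $H$ follows from the injectivity of the pushout morphisms $g$ and $h$ in the DPO. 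The two DCS clauses then force that no preserved satisfying occurrence becomes violating and that no fresh occurrence is violating, so $\mathit{ncv}(H,c)$ is bounded above by the number of preserved violating occurrences of $G$, hence by $\mathit{ncv}(G,c)$. The hard step, and the real obstacle, is converting this occurrence-level bookkeeping into the fractional inequality $\mathit{ncv}(H,c)/\mathit{ro}(H,c) \leq \mathit{ncv}(G,c)/\mathit{ro}(G,c)$: the denominators $\mathit{ro}(G,c)$ and $\mathit{ro}(H,c)$ can move in either direction depending on how destruction and creation of occurrences by the rule interact with violations, and care is needed to handle the edge cases where a denominator is zero via the convention $0/0 \coloneqq 0$.

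Finally, each of the three transformation-level implications lifts verbatim to rules: if every transformation via $r$ satisfies the stronger property, then every transformation via $r$ satisfies the weaker one, so by definition $r$ itself has the weaker property.
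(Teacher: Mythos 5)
Your handling of the first implication ($c$-guaranteeing $\Rightarrow$ directly consistency-sustaining), of the last one (consistency-sustaining $\Rightarrow$ $c$-preserving via Fact~\ref{fact:consistency-index}), of the existential half of the middle implication, and of the lifting to rules coincides with the paper's proof and is correct; your argument for the universal case of the first implication (both clauses of direct sustainment are instances of $H \models c$) is in fact cleaner than the paper's.

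The gap is exactly where you place it, and you do not close it. For universal $c = \forall(C,d)$ your bookkeeping correctly yields $\mathit{ncv}(H,c) \leq \mathit{ncv}(G,c)$, but the target is $\mathit{ncv}(H,c)/\mathit{ro}(H,c) \leq \mathit{ncv}(G,c)/\mathit{ro}(G,c)$, and nothing in the definition of direct consistency sustainment bounds $\mathit{ro}(H,c)$ from below: a rule may delete arbitrarily many \emph{valid} occurrences of $C$ while leaving a violating one untouched. Concretely, let $C$ be a single $A$-node and $d = \exists\,(A$ with an adjacent $B)$; let $G$ contain ten $A$-nodes, nine with a private $B$-neighbour and one isolated, and let the rule delete eight of the satisfied $A$'s together with their incident edges. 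Then both clauses of direct sustainment hold (the one preserved valid occurrence stays valid, no fresh occurrence of $C$ is created), yet $\mathit{ci}(G,c) = 9/10$ and $\mathit{ci}(H,c) = 1/2$. So the numerator inequality genuinely does not imply the index inequality, and no amount of ``care'' with the $0/0$ convention repairs this. You should know that the paper's own proof is no stronger at this point: it derives the same $\mathit{ncv}(H,c) \leq \mathit{ncv}(G,c)$ and then simply asserts the consistency-index inequality (writing it, moreover, in the wrong direction). Your proposal therefore reproduces the paper's argument up to and including its weakest step; to actually finish, the middle implication needs either an additional hypothesis ruling out the destruction of occurrences of $C$ (as in the criteria of Theorem~\ref{thm:criterion-consistency-sustainment}) or a notion of sustainment phrased in terms of $\mathit{ncv}$ rather than $\mathit{ci}$.
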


  The following example illustrates these notions and shows that sustainment is different from constraint guaranteeing or preserving.

    \begin{example}\label{exm:sustaining}
	Table~\ref{table:exampleProperties} denotes for each rule from the running example if it is consistency sustaining w.r.t.\ each constraint.
      Rule \textit{createClass} is directly consistency sustaining w.r.t.\ $c_1$ (no double assignments) and $c_2$ (no empty classes), since it cannot assign an already assigned feature or remove existing assignments.
			However, it is not consistency guaranteeing, since it cannot remove any violation either.
			Rule \textit{moveFeature} is consistency sustaining w.r.t $c_1$, but not directly so, since it can introduce new violations, but only while at the same time removing another violation, leading to a neutral outcome.
			Starting with the plain version of rule \emph{createClass} and computing a preserving application condition for constraint $c_1$ according to the construction provided by Habel and Pennemann~\cite{HP09} results in the application condition depicted in Fig.~\ref{fig:generated-preserving-application-condition}.
			By construction, equipping the plain version of \emph{createClass} with that application condition results in a consistency-preserving rule.
			However, whenever applied to an invalid graph, the antecedent of this application condition evaluates to \texttt{false} and, hence, the whole application condition to \texttt{true}. 
			In particular, the rule with this application condition might introduce further violations of $c_1$ and is, thus, not sustaining.  
			
			\begin{figure}[tb]
			  \centering
		      \includegraphics[width=1.00\textwidth,draft=false]{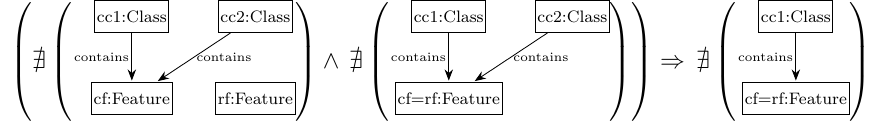}
			\caption{Generated preserving application condition for \emph{createClass} w.r.t. constraint $c_1$. The feature named \textsf{rf} is the one from the LHS of \emph{createClass}.}
			\label{fig:generated-preserving-application-condition}
		\end{figure}
  \end{example}
  
  Similarly, the \emph{direct} notion of consistency improvement preserves the validity of already valid occurrences in the case of universal constraints and degenerates to the known concept of constraint-guarantee in the existential case. 
    \begin{define}[caption={Direct consistency improvement}]
      Given a graph constraint $c$, a transformation $t: G \Rightarrow_{m,r} H$ via rule $r$ at match $m: L \hookrightarrow G$ with trace $tr$ (Fig.~\ref{fig:interaction-rule-application-occurrence}) is \emph{directly consistency improving} w.r.t. $c$ if $G \nvDash c$, the transformation is directly consistency sustaining, and either $c$ is existential and the transformation is $c$-guaranteeing or $c = \forall (C,d)$ is universal and
      \begin{align*}
        & \exists p: C \hookrightarrow G \big( p \nvDash d \wedge p^\prime \coloneqq \mathit{tr} \circ p \text{ is total}\, \wedge p^\prime \models d \big) \vee \\
        & \exists p: C \hookrightarrow G \big(p \nvDash d \wedge p^\prime \coloneqq \mathit{tr} \circ p \text{ is not total}\,\big)
      \end{align*}

      We lift the notion of directly consistency-improving transformations to the level of rules in the same way as in Def.~\ref{def:non_direct_sustainment_and_improvement}. 
			This leads to directly consistency-improving rules and a strong form of directly consistency-improving rules. 
    \end{define}

  (Direct) consistency improvement is related to, but different from constraint guarantee and consistency sustainment as made explicit in the next theorem.
  
  \begin{theorem}[Improvement relations]\label{thm:relations-improvement}
    Given a graph constraint $c$, every directly consistency-improving transformation is a consistency-improving transformation and every consistency-improving transformation is consistency sustaining. 
    Moreover, every $c$-guaranteeing transformation starting from a graph $G$ that is inconsistent w.r.t. $c$ is a directly consistency-improving transformation. 
    The analogous implications hold on the rule level, provided that there exists a match for the respective rule $r$ in a graph $G$ with $G \not\models c$:  
    
    \vspace{-\baselineskip}
    {
      \small
      $$
        \xymatrix@C=16pt@R=8pt{
           \mbox{consistency-sustaining rule}
           \ar@{<=}[dd]_{\mathit{Thm.~\ref{thm:relations-improvement}}}& & & & \mbox{$c$-guaranteeing rule} \ar@{=>}[llll]^{\mathit{Thm.~\ref{thm:relations-sustainment}}} \ar@{=>}[dd]_{\mathit{Thm.~\ref{thm:relations-improvement}}}\\
          & & & & \\
          \mbox{consistency-improving rule}  \ar@{<=}[rrrr]^{\mathit{Thm.~\ref{thm:relations-improvement}}} & & & & \mbox{directly consistency-improving rule}
        }
      $$
    }
  \end{theorem}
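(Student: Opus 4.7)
The plan is to address the three transformation-level implications one after another and then lift everything to rules.

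First I would dispatch the claim that every consistency-improving transformation is consistency sustaining: this is immediate by unfolding Def.~\ref{def:non_direct_sustainment_and_improvement}, since the defining clauses of consistency improvement explicitly include consistency sustainment.

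Next I would handle the implication from directly consistency-improving to consistency-improving. Consistency sustainment follows from Theorem~\ref{thm:relations-sustainment} applied to the direct consistency-sustainment clause of the definition. It remains to prove the strict decrease $\mathit{ncv}(G,c) > \mathit{ncv}(H,c)$ together with $\mathit{ncv}(G,c)>0$. For an existential constraint, $G\nvDash c$ gives $\mathit{ncv}(G,c)=1$ and $c$-guarantee gives $\mathit{ncv}(H,c)=0$. The universal case $c=\forall(C,d)$ is the main technical point: I would split the occurrences of $C$ in $H$ into preserved ones of the form $\mathit{tr}\circ p$ and fresh ones (those $p'$ for which no $p$ in $G$ tracks to $p'$). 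Direct sustainment forbids fresh violations, and it forces every preserved occurrence coming from a valid $p$ in $G$ to remain valid. Hence every violation counted by $\mathit{ncv}(H,c)$ must come from a violation $p$ in $G$ whose track is total and still violates $d$; so the injectivity of $\mathit{tr}$ on its domain yields $\mathit{ncv}(H,c)\le|\{p\in V_G:\mathit{tr}\circ p\text{ total and }\nvDash d\}|\le\mathit{ncv}(G,c)$. The direct-improvement clause supplies at least one $p\in V_G$ which is either destroyed or repaired, which removes at least one element from this set and yields the strict inequality $\mathit{ncv}(H,c)\le\mathit{ncv}(G,c)-1<\mathit{ncv}(G,c)$. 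Positivity of $\mathit{ncv}(G,c)$ follows since such a $p$ witnesses a violation.

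For the third implication, let $t:G\Rightarrow_{r,m}H$ be $c$-guaranteeing with $G\nvDash c$. By Theorem~\ref{thm:relations-sustainment}, $t$ is directly consistency sustaining, and $G\nvDash c$ is one of the hypotheses. In the existential case, $t$ is $c$-guaranteeing by assumption, so the existential clause of direct improvement holds. In the universal case $c=\forall(C,d)$, pick any $p:C\hookrightarrow G$ with $p\nvDash d$, which exists by $G\nvDash c$. If $\mathit{tr}\circ p$ is not total, the second disjunct of the direct-improvement definition is satisfied; otherwise $\mathit{tr}\circ p:C\hookrightarrow H$ is an occurrence in $H$, and $H\models c$ (from $c$-guarantee) forces $\mathit{tr}\circ p\models d$, giving the first disjunct.

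Finally I would lift the three implications to the rule level. The directions ``directly consistency-improving rule $\Rightarrow$ consistency-improving rule'' and ``consistency-improving rule $\Rightarrow$ consistency-sustaining rule'' transfer transformation-wise, observing that a witness transformation for the improvement clause of the source notion is also a witness for the target notion. For ``$c$-guaranteeing rule $\Rightarrow$ directly consistency-improving rule'', I apply the transformation-level statement to every application of $r$ starting from a graph inconsistent w.r.t.\ $c$ and use the side hypothesis that there is at least one such applicable match to supply the existential witness required by Def.~\ref{def:non_direct_sustainment_and_improvement}. The main obstacle I expect is the bookkeeping in the universal case of step~(1): ensuring that the map $p\mapsto\mathit{tr}\circ p$ is injective on the relevant domain and that fresh and preserved occurrences genuinely partition $\mathrm{occ}(H,c)$, so that the bound on $\mathit{ncv}(H,c)$ is tight enough to yield strict decrease.
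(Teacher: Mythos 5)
Your proposal is correct and follows essentially the same route as the paper's proof: the same three transformation-level implications in the same way (definition-unfolding for improving~$\Rightarrow$~sustaining, Theorem~\ref{thm:relations-sustainment} plus a case split on existential/universal for the other two), and the same lifting to rules using the side hypothesis to supply the witness transformation. Your bookkeeping in the universal case---partitioning $\mathrm{occ}(H,c)$ into preserved and fresh occurrences and invoking injectivity of $\mathit{tr}$ on its domain---is in fact spelled out more carefully than in the paper, which only asserts that the repaired or destroyed occurrence ``decreases $\mathit{ncv}$ by one'' while sustainment prevents new violations.
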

		
  \begin{example}\label{exm:preserving-sustaining-guaranteeing}
	        Table~\ref{table:exampleProperties} denotes for each rule of the running example if it is consistency improving w.r.t.\ each constraint.
				For example, the rule \textit{deleteEmptyClass} is directly strongly consistency improving but not guaranteeing w.r.t. $c_2$ (no empty classes), since it always removes a violation (empty class) but generally not all violations in one step.	
	Rule \textit{assignFeature} is directly consistency improving w.r.t. $c_2$ but not strongly so: 
	It can turn empty classes into non-empty ones but does not do so in every possible application.
	Rule \textit{createClass} is consistency sustaining but not improving w.r.t. $c_2$, as it cannot reduce the number of empty~classes.
  \end{example}

  \section{Static Analysis for Direct Consistency Sustainment and Improvement}
\label{sec:criteria}
    
    In this section, we consider specific kinds of constraints and present a static analysis technique for direct consistency sustainment and improvement. 
    We present criteria for rules to be \emph{directly} consistency sustaining or \emph{directly} consistency improving w.r.t. these kinds of constraint. 
    The restriction to specific kinds of constraint greatly simplifies the presentation; at the end of the section we hint at how our results may generalize to arbitrary universal constraints. 
    
    The general idea behind our static analysis technique is to check for validity of a constraint by applying a trivial (non-modifying) rule that just checks for the existence of a graph occurring in the constraint. 
    This allows us to present our analysis technique in the language of \emph{conflicts and dependencies} which has been developed to characterise the possible interactions between rule applications~\cite{Plump05,EEPT06}. 
    As a bonus, since the efficient detection of such conflicts and dependencies has been the focus of recent theoretical and practical research~\cite{LBKST19,LSTBH18}, we obtain tool support for an automated analysis based on Henshin.
   
   \medskip
   
   \noindent\fbox{%
    \parbox{.98\linewidth}{%
        In the remainder of this paper, we assume the following setting: Let $r = (L \hookleftarrow K \hookrightarrow R, ac)$ be a rule, $c$ a graph constraint of the form $\neg\exists C = \forall(\emptyset \hookrightarrow C, \texttt{false})$ and $d$ a graph constraint of the form $\forall(C, \exists C^{\prime}) = \forall(\emptyset \hookrightarrow C, \exists a: C \hookrightarrow C^{\prime})$. Given a graph $G$, there is the rule $\crule{G} \coloneqq G \xhookleftarrow{id_G} G \xhookrightarrow{id_G} G$ given. 
        }%
    }
  
  \medskip
  For the statement of the following results, note that sequential independence of the (non-modifying) rule $\crule{C}$ from $r$ means that $r$ cannot create a new match for $C$. 
	Similarly, parallel independence of $\crule{C^{\prime}}$ from $r$ means that $r$ cannot destroy a match for $C^{\prime}$.
  We first state criteria for direct consistency sustainment: 
	If a rule cannot create a new occurrence of $C$, it is directly consistency sustaining w.r.t. a constraint of the form $\neg\exists C$. 
	If, in addition, it cannot delete an occurrence of $C^\prime$, it is directly consistency sustaining w.r.t. a constraint of the form $\forall (C, \exists C^\prime)$. 
    
    \begin{theorem}[Criteria for direct consistency sustainment]\label{thm:criterion-consistency-sustainment}
      Rule $r$ is directly consistency sustaining w.r.t. constraint $c$ if and only if $\crule{C}$ is sequentially independent from $r$. 
      If, in addition, $\crule{C^\prime}$ is parallel independent from $r$, then $r$ is directly consistency sustaining w.r.t. constraint $d$. 
    \end{theorem}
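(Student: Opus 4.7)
My plan is to unfold both claims into equivalent statements about when matches factor through the DPO gluing object $D$, and then translate these between the tracking-morphism language used in the definition of direct consistency sustainment and the conflict/dependency language of Section~\ref{section:preliminaries}, specialised to the non-modifying rules $\crule{C}$ and $\crule{C^\prime}$. The key dictionary I would set up once and reuse is: $\mathit{tr} \circ p$ is total on $p \colon C \hookrightarrow G$ exactly when $p$ factors through $g \colon D \hookrightarrow G$, and $p^\prime \colon C \hookrightarrow H$ lies in the image of $\mathit{tr}$ exactly when $p^\prime$ factors through $h \colon D \hookrightarrow H$. Both observations follow directly from $\mathit{tr}(x) = h(g^{-1}(x))$ together with the injectivity of $g$ and $h$.

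For the biconditional about $c = \neg\exists C = \forall(\emptyset \hookrightarrow C, \texttt{false})$, the first universal clause in the definition of direct consistency sustainment is vacuous since no morphism satisfies $\texttt{false}$, so the definition collapses to the single condition that every occurrence $p^\prime \colon C \hookrightarrow H$ factors through $h$. On the other side, since $\crule{C}$ modifies nothing, $r$ never depends on $\crule{C}$, so sequential independence reduces to the requirement that in every sequence $G \Rightarrow_{r,m} H \Rightarrow_{\crule{C}, m^\prime} H$, the $\crule{C}$-step is not dependent on the $r$-step. Unfolding the definition of dependency with the trivial application condition of $\crule{C}$ shows that this independence is equivalent to $m^\prime$ factoring through $h$. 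Both sides therefore unfold to the same condition, and quantifying over all matches of $r$ yields the biconditional at the rule level.

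For the implication about $d = \forall(C, \exists a \colon C \hookrightarrow C^\prime)$, I would split direct consistency sustainment into (i) every preserved $p \colon C \hookrightarrow G$ with $p \models \exists C^\prime$ satisfies $\mathit{tr} \circ p \models \exists C^\prime$, and (ii) every new $p^\prime \colon C \hookrightarrow H$ satisfies $\exists C^\prime$. Clause (ii) is vacuous by the biconditional just proved, since sequential independence of $\crule{C}$ from $r$ forbids new occurrences of $C$ in $H$ entirely. For clause (i), let $q \colon C^\prime \hookrightarrow G$ witness $p \models \exists C^\prime$, so $q \circ a = p$. An analogous unfolding of parallel independence, using that $\crule{C^\prime}$ deletes nothing and hence the reverse direction is automatic, shows it to be equivalent to: every match $C^\prime \hookrightarrow G$ factors through $g$. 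Applied to $q$, this makes $\mathit{tr} \circ q$ total, and the identity $(\mathit{tr} \circ q) \circ a = \mathit{tr} \circ (q \circ a) = \mathit{tr} \circ p$ then exhibits $\mathit{tr} \circ q$ as the required witness of $\mathit{tr} \circ p \models \exists C^\prime$. The main obstacle I anticipate is precisely the initial translation between tracking morphisms and the $x \colon L_i \to D_1$ liftings from the conflict and dependency definitions; once that dictionary is laid out, both parts reduce to routine bookkeeping, because the non-modifying check rules collapse one of the two directions in each independence condition to triviality.
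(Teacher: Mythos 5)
Your proposal is correct and follows essentially the same route as the paper's proof: your ``dictionary'' is exactly the paper's auxiliary lemma relating the track morphism to factorizations through $D$, the biconditional for $c$ is handled by the same vacuity argument in both directions, and the claim for $d$ is proved by lifting the witness $q \colon C^\prime \hookrightarrow G$ through $D$ via parallel independence (your use of associativity of partial-morphism composition is just a compressed form of the paper's explicit computation with $q_D$ and the injectivity of $g$). Your explicit observation that the non-modifying check rules trivialize one direction of each (in)dependence condition is left implicit in the paper but is a correct and welcome clarification.
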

    The above criterion is sufficient but not necessary for constraints of the form $\forall (C, \exists C^\prime)$. 
    For example, it does not take into account the possibility of $r$ creating a new valid occurrence of $C$. 
    The next proposition strengthens the above theorem by partially remedying this. 
    \begin{proposition}\label{prop:criterion-consistency-sustainment2}
      If $\crule{C^\prime}$ is parallel independent from $r$ and for every weak critical sequence $G \Rightarrow_{r,m} H \Rightarrow_{\crule{C},p^{\prime\prime}} H$ it holds that there is an injective morphism $q^{\prime\prime}: C^{\prime} \hookrightarrow H$ with $q^{\prime\prime} \circ a = p^{\prime\prime}$, i.e., $p^{\prime\prime} \models \exists C^{\prime}$, then $r$ is directly consistency sustaining w.r.t. constraint $d$. 
    \end{proposition}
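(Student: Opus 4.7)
The plan is to verify the two clauses of the definition of direct consistency sustainment for the universal constraint $d = \forall(C, \exists C^{\prime})$ separately: (i) every $p: C \hookrightarrow G$ with $p \models \exists C^{\prime}$ and $\mathit{tr} \circ p$ total satisfies $\mathit{tr} \circ p \models \exists C^{\prime}$, and (ii) every $p^{\prime}: C \hookrightarrow H$ not of the form $\mathit{tr} \circ p$ for some $p: C \hookrightarrow G$ satisfies $p^{\prime} \models \exists C^{\prime}$. Clause (i) will be handled by the parallel independence hypothesis; clause (ii) by the weak critical sequence hypothesis.

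For (i), pick an injective $q: C^{\prime} \hookrightarrow G$ with $q \circ a = p$ witnessing $p \models \exists C^{\prime}$, and view $q$ as the match of the trivial transformation $G \Rightarrow_{\crule{C^{\prime}}, q} G$. Parallel independence of $\crule{C^{\prime}}$ from $r$ implies that the pair $(G \Rightarrow_{r,m} H, G \Rightarrow_{\crule{C^{\prime}}, q} G)$ is not conflicting, so there exists $x: C^{\prime} \to D$ with $g \circ x = q$ (the application condition of $\crule{C^{\prime}}$ is trivial). Since $g$ is injective (DPO), so is $x$, hence $q^{\prime} \coloneqq h \circ x : C^{\prime} \hookrightarrow H$ is injective. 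Using $g \circ (x \circ a) = q \circ a = p$ together with the totality of $\mathit{tr}$ on $p(C)$, a short diagram chase yields $q^{\prime} \circ a = h \circ x \circ a = h \circ g^{-1} \circ p = \mathit{tr} \circ p$, so $\mathit{tr} \circ p \models \exists C^{\prime}$.

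For (ii), given $p^{\prime}: C \hookrightarrow H$ that is not of the form $\mathit{tr} \circ p$, interpret it as the match of $\crule{C}$ into $H$ and consider the sequence $G \Rightarrow_{r,m} H \Rightarrow_{\crule{C}, p^{\prime}} H$. The second step is dependent on the first: otherwise there would be $x: C \to D$ with $h \circ x = p^{\prime}$, and $p \coloneqq g \circ x$ would satisfy $\mathit{tr} \circ p = p^{\prime}$, contradicting the assumption. By a completeness result for dependencies---the analogue of the Critical Pair Lemma for the ``weak'' variant, which simply drops application-condition requirements---this sequence arises as an injective extension of some weak critical sequence $\tilde{G} \Rightarrow_{r, \tilde{m}} \tilde{H} \Rightarrow_{\crule{C}, \tilde{p}} \tilde{H}$ via an injective morphism $e: \tilde{H} \hookrightarrow H$ with $e \circ \tilde{p} = p^{\prime}$. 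By hypothesis, $\tilde{p} \models \exists C^{\prime}$, witnessed by an injective $\tilde{q}: C^{\prime} \hookrightarrow \tilde{H}$ with $\tilde{q} \circ a = \tilde{p}$; post-composition yields the injective witness $q^{\prime} \coloneqq e \circ \tilde{q}: C^{\prime} \hookrightarrow H$ with $q^{\prime} \circ a = p^{\prime}$, proving $p^{\prime} \models \exists C^{\prime}$.

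The main obstacle is the invocation of completeness of weak critical sequences: every dependent sequence $G \Rightarrow_{r,m} H \Rightarrow_{\crule{C}, p^{\prime}} H$ must arise by an injective embedding from a weak critical sequence. For ordinary critical pairs this is classical DPO theory, but here one needs the dependency variant with the application-condition requirement relaxed, and with the second rule specialised to the trivial checker $\crule{C}$. Concretely, one restricts to the jointly surjective image of the co-match $n_r$ and $p^{\prime}$ inside $H$, constructs the restricted first step by pullback along $e$, and verifies that dependency survives the restriction. All of this is standard in the DPO literature, but a clean formulation of the weak variant needs to be stated and applied carefully.
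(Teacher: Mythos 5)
Your proposal is correct and follows essentially the same route as the paper's own proof: clause (i) is discharged by parallel independence of $\crule{C^\prime}$ from $r$ exactly as in the proof of Theorem~\ref{thm:criterion-consistency-sustainment}, and clause (ii) by recognising that a newly created occurrence yields a sequentially dependent application of $\crule{C}$, invoking completeness of weak critical sequences (which the paper cites from \cite{EEPT06} and \cite{EGHLO12}), and transporting the witness $q^{\prime\prime}$ along the embedding. The technical caveat you flag about carefully formulating the weak (application-condition-free) variant of completeness is precisely the point the paper handles by citation rather than by proof.
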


    For consistency improvement we state criteria on rules as well: 
    If a rule is directly consistency improving w.r.t. a  constraint of the form $\forall (C, \exists C^{\prime})$, it is either (1) able to destroy an occurrence of $C$ (deleting a part of it) or (2) to bring about an occurrence of $C^{\prime}$ (creating a part of it). 
		In case (2), we can even be more precise: 
		The newly created elements do not stem from $C$ but from the part of $C^{\prime}$ without $C$; this is what the formula in the next theorem expresses. 
		For constraints of the form $\neg\exists C$, condition (1) is the only one that holds. 
    
    \begin{theorem}[Criteria for direct consistency improvement]\label{thm:criterion-consistency-improvement}
        If rule $r$ is directly consistency sustaining w.r.t. constraint $c$, then it is directly consistency improving w.r.t. $c$ if and only if $r$ causes a conflict for $\crule{C}$. 
        If $r$ is directly consistency improving w.r.t. constraint $d$, then $r$ causes a conflict for $\crule{C}$ or $\crule{C^{\prime}}$ is sequentially dependent on $r$ in such a way that 
        \begin{equation*}
          n(R \setminus K) \cap p^\prime (C^\prime) \subseteq p^\prime (C^\prime \setminus a(C))
        \end{equation*}
        where, in this dependency, $n$ is the co-match of the first transformation applying $r$ and $p^\prime$ is the match for $\crule{C^{\prime}}$.
    \end{theorem}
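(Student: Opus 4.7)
The plan is to unfold the definition of direct consistency improvement for each of the two constraint shapes and translate the resulting trace-morphism conditions into statements about conflicts with or dependencies on the non-modifying rules $\crule{C}$ and $\crule{C^\prime}$.

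For the first statement with $c = \forall(\emptyset \hookrightarrow C, \texttt{false})$, I observe that no morphism satisfies $\texttt{false}$, so the ``$\mathit{tr}\circ p$ is total and $\mathit{tr}\circ p \models \texttt{false}$''-disjunct in the definition of direct improvement is vacuous. Direct improvement therefore reduces to the existence of a transformation $G \Rightarrow_{r,m} H$ together with some $p: C \hookrightarrow G$ such that $\mathit{tr}\circ p$ is not total; equivalently, $p$ does not factor through the context $g: D \hookrightarrow G$, which is exactly the defining condition for $r$ causing a conflict on the application $G \Rightarrow_{\crule{C},p} G$. Conversely, any conflicting pair provides such a $G$ and such a $p$, and since $r$ is assumed to be directly consistency sustaining, Theorem~\ref{thm:criterion-consistency-sustainment} guarantees that no new occurrence of $C$ is created, so $\mathit{ncv}(H,c) < \mathit{ncv}(G,c)$ and the transformation is consistency improving, which makes the rule itself consistency improving.

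For the second statement with $d = \forall(C, \exists a: C \hookrightarrow C^\prime)$, I start from a directly consistency-improving transformation $G \Rightarrow_{r,m} H$ together with the witness $p: C \hookrightarrow G$ required by the definition, for which $p \nvDash \exists C^\prime$. In case $\mathit{tr}\circ p$ is not total, the argument of the first part yields a conflict for $\crule{C}$. Otherwise, $\mathit{tr}\circ p$ is total and some $p^\prime: C^\prime \hookrightarrow H$ satisfies $p^\prime \circ a = \mathit{tr}\circ p$. I plan to show that $\crule{C^\prime}$ applied at $p^\prime$ is sequentially dependent on $r$ with the required image condition. Suppose, for contradiction, that the application factors, i.e., there is $x: C^\prime \to D$ with $h \circ x = p^\prime$. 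Since $\mathit{tr}\circ p$ is total, there is also $p_D: C \to D$ with $g \circ p_D = p$ and $h \circ p_D = \mathit{tr}\circ p$; combining $h \circ x \circ a = \mathit{tr}\circ p = h \circ p_D$ with injectivity of $h$ yields $x \circ a = p_D$, and hence $g \circ x: C^\prime \hookrightarrow G$ extends $p$ in $G$, contradicting $p \nvDash \exists C^\prime$. Hence $\crule{C^\prime}$ at $p^\prime$ is sequentially dependent on $r$.

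The image condition then follows by tracing elements: $p^\prime(a(C)) = \mathit{tr}(p(C)) \subseteq h(D)$, so $p^\prime(a(C))$ is disjoint from the newly created elements $n(R \setminus K) = H \setminus h(D)$; injectivity of $p^\prime$ upgrades this to the inclusion $n(R\setminus K) \cap p^\prime(C^\prime) \subseteq p^\prime(C^\prime \setminus a(C))$. The main obstacle is the factorisation argument in the second case: it requires using that $h$ is monic---an ambient property of DPO transformations---to pull a hypothetical factorisation of $p^\prime$ through $D$ back to an extension of $p$ in $G$, thereby converting the non-existence of extensions at $p$ in $G$ into sequential dependence at $p^\prime$ in $H$.
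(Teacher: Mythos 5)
Your proposal is correct and follows essentially the same route as the paper's proof: unfolding the definition of direct improvement, translating non-totality of $\mathit{tr}\circ p$ into a conflict with $\crule{C}$ via (non-)factorisation through the context $D$, deriving sequential dependence of $\crule{C^\prime}$ by pulling a hypothetical factorisation back through the injective transformation morphisms to contradict $p \nvDash \exists C^\prime$, and locating the newly created elements $n(R\setminus K)$ outside $h(D)$ to obtain the image condition. The only cosmetic difference is that in the converse of the first claim you detour through $\mathit{ncv}$-counting (which by itself only yields non-direct improvement), but the witness $p$ with non-total $\mathit{tr}\circ p$ that you exhibit is exactly the second disjunct of the definition of \emph{direct} improvement, so the argument goes through as in the paper.
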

    
    The above criterion is not sufficient in case of constraint $d$. 
    The existing conflicts or dependencies do not ensure that actually an \emph{invalid} occurrence of $C$ can be deleted or a new occurrence of $C^\prime$ can be created in such a way that an invalid occurrence of $C$ is \enquote{repaired}. 

    Looking closer to the criteria stated above, we can find some recurring patterns. Table~\ref{tab:generalisation-criteria} lists the kinds of universal constraints up to nesting level 2 and the corresponding criteria. While we have shown the criteria in the first two rows in Theorems~\ref{thm:criterion-consistency-sustainment} and \ref{thm:criterion-consistency-improvement}, we conjecture the criteria in the last row of Table~\ref{tab:generalisation-criteria}. To prove  generalized theorems for nesting levels $\geq 2$, however, is up to future work. 
		
    \begin{table}[t]
        \centering
        \caption{Generalisation of the criteria from Theorems~\ref{thm:criterion-consistency-sustainment} and \ref{thm:criterion-consistency-improvement} to universal constraints up to nesting level 2. Here, $ck_C$ is short for $\mathit{check}_C$, \sd{r_1}{r_2} denotes dependency of $r_2$ on $r_1$, \pd{r_1}{r_2} denotes $r_1$ causing a conflict for $r_2$, and crossed out versions denote the respective absence.}
        \label{tab:generalisation-criteria}
        \resizebox{\textwidth}{!}{%
        \begin{tabular}{@{}l|l|l@{}}
            \toprule
            \textbf{type of constr.} & \textbf{crit. for direct consist. sust.} & \textbf{crit. for direct consist. impr.}\\
            \midrule
            $\forall(C, \texttt{false}) \equiv \neg\exists C$ & \sid{r}{\mathit{ck}_{C}} & \pd{r}{\mathit{ck}_{C}}\\
            $\forall (C_1, \exists C_2)$ & \sid{r}{\mathit{ck}_{C_1}} $\wedge$ \pid{r}{\mathit{ck}_{C_2}} & \pd{r}{\mathit{ck}_{C_1}} $\vee$ \sd{r}{\mathit{ck}_{C_2}} \\
            $\forall (C_1, \exists (C_2, \neg\exists C_3))$ & \sid{r}{\mathit{ck}_{C_1}} $\wedge$ \pid{r}{\mathit{ck}_{C_2}} $\wedge$ \sid{r}{\mathit{ck}_{C_3}} & \pd{r}{\mathit{ck}_{C_1}} $\vee$ \sd{r}{\mathit{ck}_{C_2}} $\vee$ \pd{r}{\mathit{ck}_{C_3}}\\
            \bottomrule
        \end{tabular}}
    \end{table} 

\begin{table}[t]
\centering
\caption{Applying the criteria from Tbl.~\ref{tab:generalisation-criteria} to the example; $ck_C$ is short for $\mathit{check}_C$.}
\label{table:exampleChecks}
\setlength{\tabcolsep}{0.2em}
\begin{tabular}{r|ccc|cc|ccc|cc}
\toprule
									& \multicolumn{5}{c}{\textbf{Consis. sust. (suff. cr.)}} & \multicolumn{5}{|c}{\textbf{Consis. impr. (necc. cr.)}} \\ 
									& \multicolumn{3}{c}{seq. indep.} & \multicolumn{2}{|c}{par. indep.} & \multicolumn{3}{|c}{par. dep.} & \multicolumn{2}{|c}{seq. dep.}  \\ 
\textbf{Rule}                  & \footnotesize{$ck_{P_{c_1}}$} & \footnotesize{$ck_{P_{c_2}}$}  & \footnotesize{$ck_{P_{c_3}}$} & \footnotesize{$ck_{P'_{c_2}}$}  & \footnotesize{$ck_{P'_{c_3}}$}  & \footnotesize{$ck_{P_{c_1}}$} & \footnotesize{$ck_{P_{c_2}}$}  & \footnotesize{$ck_{P_{c_3}}$} & \footnotesize{$ck_{P'_{c_2}}$}  & \footnotesize{$ck_{P'_{c_3}}$} \\
\hline
\small{\textit{assignFeature}}    & -    & +    & - & +    & +     
                                  & -    & -    & - & +    & +	 \\
\small{\textit{createClass}}      & -    & -    & - & +    & +     
                                  & -    & -    & - & +    & +   \\
\small{\textit{moveFeature}}      & -    & +    & - & -    & -   
                                  & +    & -    & + & +    & +   \\
\small{\textit{deleteEmptyClass}} & +    & +    & + & +    & +    
                                  & -    & +    & - & -    & -    \\
\bottomrule
\end{tabular}
\end{table}

\begin{example}
We can use the criteria in Table \ref{tab:generalisation-criteria} to semi-automatically reason about consistency sustainment and improvement in our example.
To this end, we first apply automated conflict and dependency analysis (CDA, \cite{LSTBH18}) to the relevant pairs of mutation and check rules.
Using the detected conflicts and dependencies, we infer parallel and sequential (in)dependence per definition, as shown in Table~\ref{table:exampleChecks}.
For example, since no dependencies between \textit{assignFeature} and $check_{P_{c_1}}$ exist, we conclude that these rules are sequentially independent. 

\textit{Consistency sustainment}: Based on Table~\ref{table:exampleChecks}, we find that the sufficient criterion formulated in Theorem~\ref{thm:criterion-consistency-sustainment} is adequate to show direct consistency sustainment in four out of seven positive cases as per Table~\ref{table:exampleProperties}:
rule \textit{assignFeature} with constraint $c_2$ and rule \textit{deleteEmptyClass} with constraints $c_1$, $c_2$ and $c_3$. 
Moreover, the stronger criterion in Proposition~\ref{prop:criterion-consistency-sustainment2} allows to recognize the case of \textit{createClass} with $c_2$. 
Discerning the remaining two positive cases (\textit{assignFeature} with $c_1$; \textit{createClass} with $c_1$) from the five negative ones requires further inspection. 

\textit{Consistency improvement}: Based on Table~\ref{table:exampleChecks}, our necessary criterion allows to detect the two positive cases in Table~\ref{table:exampleProperties}:
rules \textit{deleteEmptyClass} and \textit{assignFeature} with constraint $c_2$.
The former is due to parallel dependence, the latter due to sequential dependence (where inspection of the CDA results reveals a critical sequence with a suitable co-match).
The criterion is also fulfilled in six negative cases:  \textit{assignFeature} with $c_3$, \textit{createClass} with $c_2$ and $c_3$, and \textit{moveFeature} with $c_1$, $c_2$ and $c_3$. 
Four negative cases are correctly ruled out by the criterion.
\end{example}

  \section{Related Work}
\label{section:related_work}
    In this paper, we introduce a graduated version of a specific logic on graphs, namely of nested graph constraints. 
    Moreover, we focus on the interaction of this graduation with graph transformations. 
    Therefore, we leave a comparison with fuzzy or multi-valued logics (on graphs) to future work.
    Instead, we focus on works that also investigate the interaction between the validity of nested graph constraints and the application of transformation rules. 
    
    Given a graph transformation (sequence) $G \Rightarrow H$, the validity of graph $H$ can be established with basically three strategies: (1) graph $G$ is already valid and this validity is preserved, (2) graph $G$ is not valid and there is a $c$-guaranteeing rule applied, and (3) graph $G$ is made valid by a graph transformation (sequence) step-by-step.
    
    Strategies (1) and (2) are supported by the incorporation of constraints in application conditions of rules as presented in \cite{HP09} for nested graph constraints in general and implemented in Henshin~\cite{NKAT19}. 
    As the applicability of rules enhanced in that way can be severely restricted, improved constructions have been considered of specific forms of constraints. 
    For constraints of the form $\forall (C, \exists C\rq{})$, for example, a suitable rule scheme is constructed in~\cite{KFNST19}. 
    In~\cite{BLDBG11} refactoring rules are checked for the preservation of constraints of nesting level $ \leq 2$. 
		In~\cite{NKAT19}, two of the present authors also suggested certain simplifications of application conditions; the resulting ones are still constraint-preserving. 
		In~\cite{NKAT20}, we even showed that they result in the logically weakest application condition that is still directly consistency sustaining. 
		However, the result is only shown for negative constraints of nesting level one. 
		A very similar construction of negative application conditions from such negative constraints has very recently been suggested in~\cite{BSH20}.
    
    Strategy (3) is followed in most of the rule-based graph repair or model repair approaches. In~\cite{NRA17}, the violation of mainly multiplicity constraints is considered. In~\cite{HS18}, Habel and Sandmann derive graph programs from graph constraints of nesting level $\leq 2$. In~\cite{SH19}, they extend their results to constraints in ANF which end with $\exists C$ or constraints of one of the forms $\exists(C, \neg\exists C')$ or $\neg \exists C$. They also  investigate whether a given set of rules allows to repair such a given constraint.
    In~\cite{DG17} Dyck and Giese present an approach to automatically check whether a transformation sequence yields a graph that is valid w.r.t. specific constraints of nesting level $\leq 2$.
    
    Up to now, result graphs of transformations have been considered either valid or invalid w.r.t. to a graph constraint; intermediate consistency grades have not been made explicit. Thereby, $c$-preserving and $c$-guaranteeing transformations~\cite{HP09} focus on the full validity of the result graphs. 
    Our newly developed notions of consistency-sustainment and improvement are located properly in between existing kinds of transformations (as proven in Theorems~\ref{thm:relations-sustainment} and \ref{thm:relations-improvement}). 
		These new forms of transformations make the gradual improvements in consistency explicit. 
		While a detailed and systematic investigation (applying the static methods developed in this paper) is future work, a first check of the kinds of rules generated and used in \cite{KTRK16} (model editing), \cite{NRA17} (model repair), and \cite{BZJ19} (search-based model engineering) reveals that---in each case---at least some of them are indeed (directly) consistency-sustaining. 
    We are therefore confident that the current paper formalizes properties of rules that are practically relevant in diverse application contexts. 
		Work on partial graphs as in, e.g. \cite{SemerathVarro17}, investigates the validity of constraints in families of graphs which is not our focus here and therefore, not further considered. 
    
    Stevens in~\cite{Stevens14} discusses similar challenges in the specific context of bidirectional transformations. Here, consistency is a property of a pair of models (loosely, graphs) rather than between a graph and constraint. In this sense, it may be argued that our formalisation generalises that of~\cite{Stevens14}. Several concepts are introduced that initially seem to make sense only in the specific context of bidirectional transformations (\emph{e.g.,} the idea of $\stackrel{\rightarrow}{R}$ candidates), but may provide inspiration for a further extension of our framework with corresponding concepts. 

  \section{Conclusions}

  In this paper, we have introduced a definition of graph consistency as a graduated property, which allows for graphs to be partially consistent w.r.t. a nested graph constraint, inducing a partial ordering between graphs based on the number of constraint violations they contain. Two new forms of transformation can be identified as consistency sustaining and consistency improving, respectively. They are properly located in between the existing notions of constraint-preserving and constraint-guaranteeing transformations. Lifting them to rules, we have presented criteria for determining whether a rule is consistency sustaining or improving w.r.t. a graph constraint.  We have demonstrated how these criteria can be applied in the context of a case study from search-based model engineering.
  
  While the propositions we present allow us to check a given rule against a graph constraint, their lifting to a set of constraints is the next step to go. Furthermore, algorithms for constructing consistency-sustaining or -improving rules from a set of constraints are left for future work.
  
  \paragraph{Acknowledgements.}
  
    We thank the ICGT reviewers for their insightful and helpful comments. This work has been partially supported by DFG grants TA 294/17-1 and 413074939.
    
  \bibliographystyle{splncs04}
  \bibliography{biblio}
  
	\longv{%
		\appendix
		\newpage
		\section{Detailed Proofs}
\label{sec:appendix}
  In this appendix we present the detailed proofs of all statements of the paper. 
  
  \begin{proof}[of Lemma~\ref{lem:non-equivalence-anf}]
    For trivial reasons, $\emptyset \models c_2$ and $\emptyset \not\models c_1$:
    Since no morphism occurring in $c_1$ or $c_2$ is an isomorphism, $C_1 \neq \emptyset \neq C_2$. 
    Hence, there does not exist a morphism from $C_i$ to $\emptyset$ for $i=1,2$. \qed
  \end{proof}
  
  \begin{proof}[of Fact~\ref{fact:consistency-index}]
    First, $0 \leq \mathit{ci}(G,c) \leq 1$ since in any case $\mathit{ncv}(G,c) \leq \mathit{ro}(G,c)$, i.e., $0 \leq \frac{\mathit{ncv}(G,c)}{\mathit{ro}(G,c)} \leq 1$. 
    
    Moreover, $\mathit{ci}(G,c) = 1$ if and only if $\mathit{ncv}(G,c) = 0$ if and only if $G \models c$ for all $c \in \mathcal{C}$.
    
    The last claim for existential constraints follows from the fact that $\frac{\mathit{ncv}(G,c)}{\mathit{ro}(G,c)} \in \{0,1\}$ by definition of $\mathit{ncv}(G,c)$ and $\mathit{ro}(G,c)$. \qed
  \end{proof}

  \begin{proof}[of Theorem~\ref{thm:relations-sustainment}]
    Throughout the proof, let $c$ be the relevant constraint and $G \Rightarrow H$ a transformation.
    
    We first show that a $c$-guaranteeing transformation is directly consistency sustaining.
    By definition, guarantee of a constraint implies its preservation~\cite{HP09}. 
    In particular, the statement that guarantee implies direct sustainment is true in the case of existential constraints. 
    For the universal case, by $H \models c$, either $\mathit{occ}(G,c) = \mathit{ro}(G,c) = 0$ or $\mathit{ncv}(G,c) = 0$.
    In either case, the definition of direct consistency sustainment is met.
    
    %
    
    Finally, let a consistency sustaining transformation be given. 
    If already \\ $\mathit{ci}(G,c) = 1$, then this implies $\mathit{ci}(H,c) = 1$.
    In particular, the transformation is $c$-preserving. 
    
    Since the above statements are true on the transformation level, they can be directly lifted to the rule level. \qed
  \end{proof}
  
  \begin{proof}[of Theorem~\ref{thm:relations-improvement}]
    Again, throughout the proof, let $c$ be the relevant constraint and $G \Rightarrow H$ a transformation. 
    Note that, for both notions of improvement, by definition an improving transformation $G \Rightarrow H$ assumes $G \not\models c$.
    
    Therefore, first, let $G \Rightarrow H$ be a $c$-guaranteeing transformation where $G \not\models c$. 
    By Theorem~\ref{thm:relations-sustainment}, this transformation is consistency sustaining, in particular. 
    Hence, in case $c$ is an existential constraint, the transformation is directly consistency improving by definition.
    Therefore, let $c$ be a universal constraint.
    $G \not\models c$ implies that there is a morphism $p: C \hookrightarrow G$ with $p \not \models d$. 
    As $H \models c$ by definition of $c$-guaranteeing rule applications, either $\mathit{tr} \circ p$ is not total or $\mathit{tr} \circ p \models d$. 
    This means, either the first or the second condition of the definition of a directly consistency-improving transformation is met and therefore the transformation is directly consistency improving.
    
    In the following we show that every directly consistency-improving transformation is consistency improving. 
    The last claim, that every consistency-improving transformation is consistency sustaining, again holds by definition.
    
    First, every directly consistency-improving transformation is directly consistency sustaining by definition and by Theorem~\ref{thm:relations-sustainment} every directly consistency-sustaining transformation is consistency sustaining. 
    This means, we only have to check the conditions on the number of constraint violations.
    By assumption $\mathit{ncv}(G,c) > 0$.
    In case $c$ is an existential constraint, $H \models c$: 
    In that case, by definition, the transformation is even $c$-guaranteeing. 
    Hence, 
    \begin{equation*}
      \mathit{ncv}(G,c) = 1 > 0 = \mathit{ncv}(H,c)
    \end{equation*}
    and the transformation is consistency-improving.
    In case $c$ is universal, there exists (at least) one occurrence $p: C \hookrightarrow G$ that meets either the first or the second condition of the formula.
    In either case, this has the effect of decreasing $\mathit{ncv}(G,c)$ by one. 
    Moreover, direct consistency sustainment ensures that no new occurrences that violates the constraint is introduced. 
    In summary, $\mathit{ncv}(G,c) > \mathit{ncv}(H,c)$ and the transformation is consistency improving.
    
    On the rule level, (direct) consistency improvement is defined in such a way that at least one (directly) consistency-improving transformation via that rule needs to exist. 
    Hence, the proven statements on the transformation level lift to the rule level as long as there exists a $c$-guaranteeing transformation starting at an inconsistent graph $G$ via that rule. \qed 
  \end{proof}
  
  We formulate a technical lemma that we are going to use in the proof of the next theorems. 
  It relates the track morphism of a transformation to occurrences of constraints. 
  \begin{lemma}\label{lem:trace-morphism}
    Given a transformation $G \Rightarrow H$ and an occurrence $p: C \hookrightarrow G$ of a constraint $c$ in $G$, the track morphism $\mathit{tr}: G \dashrightarrow H$ is total, when restricted to $p(C)$ (i.e., $\mathit{tr} \circ p$ is a total morphism) if and only if there exists a morphism $p_D: C \hookrightarrow D$ such that $p = g \circ p_D$.
  \end{lemma}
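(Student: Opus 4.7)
The strategy is a straightforward unfolding of the definition of the track morphism and an appeal to the fact that, in the DPO framework as set up in the preliminaries, the transformation morphism $g: D \hookrightarrow G$ is injective. The plan is therefore to reduce both directions of the biconditional to the set-theoretic statement ``$p(C) \subseteq g(D)$'' and then construct the witness $p_D$ explicitly.

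First I would recall from the definition that $\mathit{tr}(x) = h(g^{-1}(x))$ is defined exactly on $g(D) \subseteq G$. Consequently, $\mathit{tr} \circ p$ is a total morphism on $C$ if and only if $p(C) \subseteq g(D)$; this is the pivot of the whole argument.

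For the $(\Leftarrow)$ direction I would assume a morphism $p_D: C \hookrightarrow D$ with $p = g \circ p_D$. Then $p(C) = g(p_D(C)) \subseteq g(D)$, so by the observation above $\mathit{tr} \circ p$ is defined on all of $C$, i.e.\ total. For the $(\Rightarrow)$ direction I would assume $\mathit{tr} \circ p$ is total, so $p(C) \subseteq g(D)$. Since $g$ is injective, it admits a partial inverse on its image, and I would define $p_D \coloneqq g^{-1} \circ p: C \to D$ node- and edge-wise. Well-definedness follows from $p(C) \subseteq g(D)$, injectivity of $p_D$ follows from injectivity of $p$ (and of $g^{-1}$ on $g(D)$), and the equation $g \circ p_D = p$ holds by construction. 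Verifying that $p_D$ is actually a graph morphism (respecting source/target of edges) is immediate because both $p$ and $g$ are graph morphisms and $g$ is injective.

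The only subtle point worth flagging is the tacit reliance on $g$ being a monomorphism, which in turn hinges on the DPO setup used in the paper (pushouts of rules along injective morphisms yield injective transformation morphisms). Beyond this, the argument is essentially diagram-chasing, so I do not foresee a genuine obstacle. \qed
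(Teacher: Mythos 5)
Your proposal is correct and follows essentially the same route as the paper's own proof: both reduce the claim to the observation that the domain of $\mathit{tr}$ is exactly $g(D)$ and construct the witness as $p_D \coloneqq g^{-1} \circ p$, relying on the injectivity of $g$. Your version merely spells out the routine well-definedness and graph-morphism checks that the paper leaves implicit.
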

  
  \begin{proof}
    For the first direction, set $p_D(x) \coloneqq g^{-1}(p(x))$ for all $x \in C$. 
    Since $g^{-1}(p(C))$ belongs to the domain of $\mathit{tr}$ by assumption, this results in a graph morphism with the desired property. 
    
    In the other direction, also the existence of $p_D: C \hookrightarrow D$ with $p = g \circ p_D$ states that $p(C)$ belongs to the domain of $\mathit{tr}$, i.e., $\mathit{tr} \circ p$ is total. \qed
  \end{proof}
  
  \begin{proof}[of Theorem~\ref{thm:criterion-consistency-sustainment}]
    We first consider the case of constraints $c = \forall (C,\texttt{false}) = \neg\exists C$.
    Assume that rule $r$ is directly consistency sustaining. 
    Let there be any transformation $t: G \Rightarrow_{r,m} H$ such that there is an injective morphism $p^\prime: C \hookrightarrow H$ (i.e., a match for $\crule{C}$). 
    Since $p^\prime \not\models \texttt{false}$, the second condition on directly consistency-sustaining transformations implies that there exists a morphism $p_D: C \hookrightarrow D$ such that $p^\prime = h \circ p_D$ (see Lemma~\ref{lem:trace-morphism}). 
    This means that the application of $\crule{C}$ is sequentially independent from $t$; and since $t$ was arbitrary, $\crule{C}$ is sequentially independent from $r$. 
    
    Conversely, assume $\crule{C}$ to be sequentially independent from $r$.
    Let there be a transformation $t: G \Rightarrow_{r,m} H$ and an injective morphism $p^{\prime}: C \hookrightarrow H$. 
    This morphism can be understood as a match for $\crule{C}$.  
    By the definition of sequential independence, there is an injective morphism $p_D: C \hookrightarrow D$ such that $p^{\prime} = h \circ p_D$. 
    This implies $p^{\prime} = \mathit{tr} \circ p$ (see Lemma~\ref{lem:trace-morphism}) where $p \coloneqq g \circ p_D$ and $\mathit{tr}$ is the trace morphism corresponding to the transformation. 
    This implies that both conditions in the definition of direct consistency improvement quantify over the empty set in that case; hence, rule $r$ is directly consistency improving.
    
    Secondly, we consider the case of constraint $d$. 
    Assume rule $r$ to be such that $\crule{C}$ is sequentially independent from $r$ and $r$ does not cause a conflict for $\crule{C^{\prime}}$. 
    As in the case of constraint $c$, the sequential independence implies that the second condition in the definition of direct consistency sustainment quantifies over an empty set in this case. 
    Hence, it is trivially true.
    We use the parallel independence to show that also the first condition is met.
  
    For this, let $G \Rightarrow_{r,m} H$ be a transformation step from $G$ to $H$ via rule $r$ at match $m$ where $D$ is the context graph of that transformation step. 
    Let $p: C \hookrightarrow G$ be a valid occurrence of $c$ such that there exists a morphism $p_D: C \hookrightarrow D$ with $p = g \circ p_D$ (compare Figure~\ref{fig:interaction-rule-application-occurrence}). 
    By validity of the occurrence, there exists an injective morphism $q: C^{\prime} \to G$ such that $q \circ a = p$.
    We have to show that there exists an injective morphism $q^{\prime}: C^{\prime} \hookrightarrow H$ such that $q^{\prime} \circ a = p^{\prime}$ where $p^{\prime} \coloneqq h \circ p_D$.

    The morphism $q$ can be understood as a match for $\crule{C^{\prime}}$ in $G$ and since $r$ does not cause a conflict for $\crule{C^{\prime}}$, there is an injective morphism $q_D: C^{\prime} \hookrightarrow D$ such that $q = g \circ q_D$. 
    One first computes
    \begin{align*}
        g \circ q_D \circ a &= q \circ a \\
                            &= p \\
                            &= g \circ p_D
    \end{align*}
    which implies $q_D \circ a = p_D$ since $g$ is injective.
    This can then be used to compute
    \begin{align*}
        q^{\prime} \circ a  &= h \circ q_D \circ a \\
                        &= h \circ p_D \\
                        &= p^{\prime} 
    \end{align*}
    as desired. \qed
  \end{proof}
  
  \begin{proof}[of Proposition~\ref{prop:criterion-consistency-sustainment2}]
    For any transformation $G \Rightarrow_{r,m} H$ that does not create a new occurrence of $C$ the argument is exactly the same as in the above proof of Theorem~\ref{thm:criterion-consistency-sustainment}. 
    Also, by absence of conflicts, that already existing occurrences remain valid is proven in the same way. 
    
    Therefore, let $G \Rightarrow_{r,m} H$ be a transformation that creates (at least one) such a new occurrence. 
    This means, there is an injective morphism $p^{\prime}: C \hookrightarrow H$ such that there does not exist an injective morphism $p_D: C \hookrightarrow G$ with $h \circ p_D = p^{\prime}$ (compare Figure~\ref{fig:interaction-rule-application-occurrence}, again), i.e., the application of $\crule{C}$ to $H$ at match $p$ is sequentially dependent on the transformation $G \Rightarrow_{r,m} H$. 
    We have to show that this occurrence is valid which means that there exists an injective morphism $q^{\prime}: C^{\prime} \hookrightarrow H$ such that $q^{\prime} \circ a = p^{\prime}$. 
    By the duality between conflicts and dependencies and the completeness result for weak critical pairs (see \cite[Remark~5.10]{EEPT06} and \cite[Lemma~6.4]{EGHLO12}) there is a weak critical sequence $G^{\prime} \Rightarrow_{r,m^{\prime}} A \Rightarrow_{\crule{C},p^{\prime\prime}} A$ that embeds into the sequence $G \Rightarrow_{r,m} H \Rightarrow_{\crule{C},p^{\prime}} H$ via a. 
    By assumption, there exists an injective morphism $q^{\prime\prime}: C^{\prime} \hookrightarrow A$ with $q^{\prime\prime} \circ a = p^{\prime\prime}$. 
    Moreover, $A$ embeds into $H$ via an injective morphism $k: A \hookrightarrow H$ such that $p^{\prime} = k \circ p^{\prime\prime}$ (by construction of weak critical pairs). 
    Hence, for $q^{\prime} \coloneqq k \circ q^{\prime\prime}$ we compute
    \begin{align*}
      q^{\prime} \circ a    & = k \circ q^{\prime\prime} \circ a \\
                        & = k \circ p^{\prime\prime} \\
                        & = p^{\prime}
    \end{align*}
    as desired. \qed
  \end{proof}
  
  \begin{proof}[of Theorem~\ref{thm:criterion-consistency-improvement}] 
    Here, we first consider the case of constraint $d = \forall (C, \exists C^{\prime})$.
    
    If $r$ is directly consistency improving, there exists a transformation step $G \Rightarrow_{r,m} H$ that constitutes a consistency improving rule application; in particular $G \not\models d$. 
    Hence, there exists an injective morphism $p: C \hookrightarrow G$ such that there exists no injective morphism $q: C^{\prime} \hookrightarrow G$ with $q^{\prime} \circ a = p$ (once more, compare Figure~\ref{fig:interaction-rule-application-occurrence} for the following).
    Moreover, either (i) for $p^{\prime} = h \circ p_D = \mathit{tr} \circ p$ (compare Lemma~\ref{lem:trace-morphism}) there exists an injective morphism $q^{\prime}: C^{\prime} \hookrightarrow H$ such that $q^{\prime} \circ a = p^{\prime}$, i.e., $q^{\prime} \models \exists C^{\prime}$, or (ii) there is no morphism $p_D: C \to D$ such that $p = g \circ p_D$.
    
    Assume (i). 
    There is a morphism $q^{\prime}: C^{\prime} \to H$ such that $p^{\prime} = q^{\prime} \circ a$. 
    The morphism $q^{\prime}$ is a match for the rule $\crule{C^{\prime}}$. 
    If $\crule{C^{\prime}}$ were sequentially independent from $r$, there was a morphism $q_D: C^{\prime} \to D$ such that $q^{\prime} = h \circ q_D$ where $D$ is the context object of the transformation step. 
    But, as in the proof of the above theorem, the resulting morphism $q \coloneqq g \circ q_D$ would satisfy $q \circ a = p$ which contradicts the assumption. 
    Hence, the application of $\crule{C^{\prime}}$ is sequentially dependent on the application of $r$. 
    
    Moreover, assume there to be an element
    \begin{equation*}
      x \in \left(n(R \setminus K) \cap p^{\prime}(C^{\prime})\right) \cap p^{\prime}(a(C)) \enspace .
    \end{equation*}
    By $x \in n(R \setminus K)$, $x \notin \mathit{tr} \circ p$ ($x$ has been newly created in $H$). 
    However, by $x \in p^{\prime}(a(C)$ and $p^{\prime} = \mathit{tr} \circ p$, $x \in \mathit{tr} \circ p$ which is a contradiction. 
    Hence, 
    \begin{equation*}
      x \in \left(n(R \setminus K) \cap p^{\prime}(C^{\prime})\right) \subseteq p^{\prime}(C^{\prime} \setminus a(C)) \enspace .
    \end{equation*}
    
    Assume (ii). 
    The morphism $p$ can be understood as match for $\crule{C}$ in $G$ and the non-existence of $p_D$ by definition means that the application of $r$ caused a conflict for the application of $\crule{C}$. 
    
    Secondly, we consider the case of constraint $c = \neg\exists C = \forall (C, \texttt{false})$. 
    Since no morphism $p$ can satisfy $\texttt{false}$, the first condition on direct consistency improvement can never be satisfied in that case. 
    Hence, to be directly consistency improving, the second condition must be true. 
    Again, as in (ii) above, the existence of a morphism $p: C \hookrightarrow G$ such that $\mathit{tr} \circ p$ is not total implies that the application of $r$ caused a conflict for $\crule{C}$. 
    Hence, directly consistency improving rules w.r.t. $c$ cause conflicts for $\crule{C}$.
    
    Conversely, assume $r$ to be a directly consistency sustaining rule w.r.t. $c$ such that $r$ causes a conflict for $\crule{C}$. 
    Hence, there is a transformation $t: G \Rightarrow_{r,m} H$ such that there exists an injective morphism $p: C \hookrightarrow G$ but no morphism $p_D: C \hookrightarrow D$ such that $p = g \circ p_D$. 
    By Lemma~\ref{lem:trace-morphism} (and since $p\not\models \texttt{false}$) this means that the second condition of the definition of direct consistency improvement is met. 
    Since additionally $r$ is directly consistency sustaining by assumption, this means that $r$ is directly consistency improving. \qed 
  \end{proof}
	
	}
\end{document}